%
%
%
%
%
\RequirePackage{fix-cm}
\documentclass[twocolumn]{svjour3}       
\smartqed  
\usepackage[utf8]{inputenc}

\usepackage{graphicx}
\usepackage{amsmath}
\usepackage{amsfonts}
\usepackage{subfigure}
\usepackage{algorithm}
\usepackage{algorithmic}
\usepackage[version=3]{mhchem}
\usepackage{rotating}
\usepackage{dsfont}
\usepackage{caption}
\usepackage{float}
\usepackage[numbers, sort&compress]{natbib}
\usepackage{amssymb}
\DeclareMathOperator{\E}{\mathbb{E}}
\DeclareMathOperator*{\argmax}{\arg\!\max}


\begin{document}

\title{An automatic adaptive method to combine summary statistics in approximate Bayesian computation 
}

\titlerunning{Combining summary statistics in approximate Bayesian computation}        

\author{Jonathan U. Harrison         \and
        Ruth E. Baker 
}

\authorrunning{Jonathan Harrison 
\and 
Ruth Baker
} 

\institute{Jonathan U. Harrison \at
              Mathematical Institute, \\ University of Oxford \\
              \email{harrison@maths.ox.ac.uk}           
           \and
           Ruth E. Baker \at
              Mathematical Institute, \\ University of Oxford
}

\date{Received: date / Accepted: date}

\maketitle

\begin{abstract}
To infer the parameters of mechanistic models with intractable likelihoods, techniques such as approximate Bayesian computation (ABC) are increasingly being adopted.
One of the main disadvantages of ABC in practical situations, however, is that parameter inference must generally rely on summary statistics of the data.
This is particularly the case for problems involving high-dimensional data, such as biological imaging experiments.
However, some summary statistics contain more information about parameters of interest than others,
and it is not always clear how to weight their contributions within the ABC framework.
We address this problem by developing an automatic, adaptive algorithm that chooses weights for each summary statistic.
Our algorithm aims to maximize the distance between the prior and the approximate posterior by automatically adapting the weights within the ABC distance function.
Computationally, we use a nearest neighbour estimator of the distance between distributions. 
We justify the algorithm theoretically based on properties of the nearest neighbour distance estimator.
To demonstrate the effectiveness of our algorithm, we apply it to a variety of test problems, including several stochastic models of biochemical reaction networks, and a spatial model of diffusion, and compare our results with existing algorithms.

\keywords{Approximate Bayesian Computation \and Summary Statistics \and Sequential Monte Carlo \and Likelihood-free}
\end{abstract}

\section{Introduction}

When using quantitative models to explore biological or physical phenomena,
it is crucial to be able to estimate parameters of these models and account appropriately for uncertainty in both the parameters and model predictions.
Bayesian statistics offers a wealth of tools in this regard \cite{hines2015primer,wilkinson2009stochastic}.
Bayes' theorem gives us that the posterior, $p(\theta | D)$, of parameters, $\theta$, given data, $D$, is proportional to a prior, $\pi(\theta)$,
on the parameters multiplied by the likelihood, $p(D| \theta)$, of data, $D$, given those parameters: $p(\theta | D) \propto p(D|\theta) p(\theta)$.
The prior represents our beliefs about the parameters prior to observing the data, the likelihood gives the probability of observing the data, given a certain set of parameters, and these result in the posterior,
which returns updated beliefs about the parameters after having observed the data.

However, much of the current theory surrounding the generation of posterior distributions for parameter inference relies on being able to evaluate the likelihood of the data given the parameters of a model. 
In practice, for a large class of mechanistic models the likelihood is not tractable, either due to computational or analytical complexity.
Therefore, the use of likelihood-free methods for inference, including approximate Bayesian computation (ABC) \cite{pritchard1999population,beaumont2002approximate,beaumont2010approximate,turner2012tutorial,sunnaaker2013approximate}, 
indirect inference \cite{gourieroux1993indirect}, synthetic likelihoods \cite{wood2010statistical,price2017bayesian}, particle Markov Chain Monte Carlo (pMCMC) \cite{andrieu2009pseudo,andrieu2010particle,golightly2011bayesian,owen2015scalable}, 
expectation propogation \cite{barthelme2014expectation}, and other similar methods, has become widespread \cite{hartig2011statistical}.
In particular, ABC has been widely adopted due to its ease of understanding and implementation.

\subsection{Approximate Bayesian computation}
Suppose we wish to infer a posterior distribution over parameters $\theta$ of a generative model such that we can simulate from $\textbf{x} \sim f(\textbf{x} | \theta) $.
In ABC, parameters $\theta$ are drawn from a prior, $\pi(\theta)$, and data, $\mathbf{x^*}$, is simulated from the generative model using those parameters, such that $\mathbf{x^*} \sim f(\mathbf{x}|\theta)$.
The distance between the simulated dataset, $\mathbf{x^*}$, and the real data, $\mathbf{y}$, is calculated using a distance function $d(\mathbf{x^*},\mathbf{y})$.
If this distance is less than a certain tolerance, $\epsilon$, then the parameters $\theta$ can be accepted into the approximate posterior sample.
Choice of the tolerance $\epsilon$ can be avoided, to some extent, by simulating a large number, $N$, of parameter samples and datasets,
calculating the corresponding distances for these and accepting the proportion $\alpha$ that lie closest to the real data.
We will use this approach in this work.

In cases where the prior and posterior distributions are very different, the rejection sampling version of ABC described above can have very low acceptance rates.
Algorithm 1 summarizes how samples from an approximate posterior can be generated via a more efficient version of ABC using sequential Monte Carlo techniques, known as ABC-SMC \cite{toni2009approximate,sisson2007sequential, del2006sequential}.
Importance sampling is used iteratively so that instead of sampling repeatedly from the prior, parameters are sampled from an approximate posterior at each generation of the algorithm.
A weight must be given to each sample to correct for the fact that it is not drawn from the prior.

\begin{algorithm}[h!] \label{abc-smc}
\floatname{algorithm}{Algorithm 1}
\renewcommand{\thealgorithm}{}
\caption{ABC-SMC}

\begin{algorithmic}[1]
\STATE Set generation index $t=1$. \\
\FOR{$i$ = 1 to $N$} 
  \STATE Sample from prior $\theta^{**} \sim \pi(\theta)$. \\
  \STATE Simulate dataset $\mathbf{x}^{*i} \sim f(\mathbf{x}|\theta^{**}) $ and calculate distance $d(s(\mathbf{y}),s(\mathbf{x}^{*i}))$. 
  \STATE Set $\theta^i_t = \theta^{**}.$ \\
  Calculate the weight $v^i_t$ for particle $\theta^i_t$ as $ v^i_t = 1 $
  \STATE Normalize the particle weights ${v_t^i}$. \\
\ENDFOR
 \STATE Select the proportion $\alpha$ of samples closest to the real data to keep and reject the rest, resulting in $M=\left \lfloor{\alpha N}\right \rfloor $ samples.
\FOR{$t$ = 2 to $T$}
\FOR{$i$ = 1 to $N$}
  \STATE \label{step3} Sample $\theta^{*}$ from previous population ${\theta_{t-1}^{i}}$ with weights $v_{t-1}$. \\
  Peturb $\theta^*$ to give $\theta^{**} \sim K_t(\theta|\theta^*)$.
  If $\pi(\theta^{**}) = 0$, return to step 11.
  \STATE \label{step4} Simulate dataset $\mathbf{x}^{*i} \sim f(\mathbf{x}|\theta^{**}) $ and calculate distance $d(s(\mathbf{y}),s(\mathbf{x}^{*i}))$. 
  \STATE Set $\theta^i_t = \theta^{**}.$ \\
  Calculate the weight $v^i_t$ for particle $\theta^i_t$ via
  \begin{equation*}
    v^i_t = \frac{\pi (\theta_t^i)}{\sum_{j=1}^M v_{t-1}^j K_t (\theta_{t-1}^j, \theta_t^i)}.
  \end{equation*}
  \STATE Normalize the particle weights ${v_t^i}$. \\
\ENDFOR
\STATE Select the proportion $\alpha$ of samples closest to the real data to keep and reject the rest, resulting in $M=\left \lfloor{\alpha N}\right \rfloor$ samples.
\ENDFOR
\STATE \textbf{return} $\{\theta_T^i\}_{i=1}^M$, $\{v_T^i\}_{i=1}^M$
\end{algorithmic}
\end{algorithm}

\subsection{Choice of summary statistics}
Suppose we are interested in inferring multi-dimensional parameters for a model that we can simulate, but cannot evaluate the likelihood directly. 
In many practical circumstances, the data (either collected experimentally or simulated from the \textit{in silico} model) will be very high dimensional. 
High-dimensional data poses difficulties within the ABC framework, as it is difficult to sensibly estimate when the output of a particular simulation is `close' to the data.
Even taking account of domain expertise, it can be hard to determine which features of the data are important.
This issue of comparing high-dimensional data is further compounded using stochastic models where there is noise in the process model in addition to measurement noise.
Repeatedly drawing from a stochastic model with the same parameter values can give vastly different outputs.

As such, it is often necessary to work with a lower-dimensional vector of summary statistics, $s(\mathbf{x})$, of the data, such that we require the distance between summary statistics is less than the tolerance, $d(s(\mathbf{x^*}),s(\mathbf{y})) < \epsilon$.
Examples of these summary statistics may be data points within a time series, an average transition time between different states of a system, or the moments of a certain species within a model.
However, not all summary statistics are equally informative about the posterior.
Common practice is to combine summary statistics based on some heuristic approach, such as the weighting the contribution of each summary statistic according to its standard deviation. 
However, it is not clear whether these heuristic approaches result in optimal weighting of the various summary statistics available.
As such, the aim of this work is to provide an automated and adaptive method for determining the weighting of available summary statistics in order to optimize the quality of the resulting posterior.

Previous work has also considered how to weight or select summary statistics for ABC. 
\citet{fearnhead2012constructing} developed a popular method to find informative linear combinations of summary statistics by fitting a regression for each model parameter.
Another successful approach is the subset selection method of \citet{barnes2012considerate}, which uses an approximate sufficiency criterion to select a subset of summary statistics on which to base inference.
We explore these methods in further detail in Section \ref{DimReduction}.

A genetic algorithm has been used to choose the weights of different summary statistics \cite{jung2011choice}. 
This genetic algorithm attempts to optimize the mean squared error (MSE) of the posterior samples from the true parameter, however this is generally not known in practice.
A method for adaptively choosing summary statistic weights for ABC based on the scale of the summary statistics has also been investigated \cite{prangle2015adapting}.
The median absolute deviation, a measure of spread of a statistic, is used for the scaling. 
The aim is that all summary statistics contribute equally to the distance function. 
In practice, however, this may not be the most desirable choice, as some summary statistics are clearly more informative than others.

Recently, \citet{singh2018multi} have proposed a multi-armed bandit problem approach to selecting summary statistics for ABC.
Approaches using machine learning tools such as random forests \cite{pudlo2015reliable} to aid model selection and neural networks to form a parameteric model of the posterior \cite{papamakarios2016fast, papamakarios2018sequential} have also been investigated.
Other work has avoided using summary statistics at all by considering Wasserstein distance between full data sets \cite{bernton2017inference}.

In this work, we approach the problem from the point of view of finding the right distance function, adapted to information contained in the summary statistics, rather than selecting a certain subset of summary statistics.
We provide an automatic algorithm that adaptively selects weights for each summary statistic within the ABC distance function.

\subsection{Outline} \label{Outline}
Our contribution in this work is to present a flexible, novel framework for improving inference with ABC by adapting the weights of different summary statistics to maxmize the gain in posterior information from a dataset.
This is helpful for avoiding bias and variance from redundant information in data (such as would be the case when including a summary statistic that is uncorrelated with the parameters of interest).
A further advantage of our work is that it can alleviate the burden of designing and selecting summary statistics `by hand', since a large collection of summaries can be used and weighted appropriately via our procedure.
It is also possible to combine our framework with existing dimensionality reduction techniques for summary statistics in ABC (see Section \ref{DimReduction}).

We outline in Section \ref{weights_algorithm} our adaptive algorithm for combining summary statistics in an ABC framework.
We provide theoretical justification for the algorithm in Section \ref{justification} and demonstrate that, in the appropriate limit, we obtain convergence to the posterior distribution. 
To demonstrate the utility of our algorithm, we apply it to several test problems based on biochemical reaction networks in Section \ref{Examples}.
We compare results of parameter inference using our algorithm against benchmark results
from applying ABC-SMC using other choices of weights for the summary statistics.
Finally, in Section \ref{Summary}, we summarize the work presented in this article and 
compare our methodology for combining summary statistics with other techniques
in the literature that are based on dimensionality reduction of a set of summary statistics.

\section{An algorithm for automatic weighting of summary statistics} \label{weights_algorithm}

In order to use ABC-SMC (see Algorithm 1), we must specify a function to measure the distance between simulated and real datasets.
Suppose we take a weighted Euclidean distance as our ABC distance function such that $$d_{\mathbf{w}}(s(\mathbf{x_1}),s(\mathbf{x_2})) = \sum_{i=1}^{\kappa} w_i (s_{1 i} - s_{2 i})^2, $$
where $s(\mathbf{x}) = (s_1, \ldots, s_{\kappa}) \in \mathbb{R}^{\kappa}$ is a vector of summary statistics and the sum over $i$ is taken over all the summary statistics considered.
This distance function is a reasonable and flexible choice commonly used in the literature \cite{mckinley2009inference}.
It is these distance weights, $w_i$, that control how the summary statistics are combined in this case.
Given simulated pairs of parameter samples and datasets, we find weights, $\mathbf{w} = (w_1, \ldots, w_\kappa) \in E \subset \mathbb{R}^{\kappa}$, that maximize a distance between the prior and the posterior that
represents the maximum possible gain in information about the parameters from the given data.
Constructing the weights in this way allows us to account for the scale of the summary statistics, as well as their relative contribution to a posterior. 

\subsection{Adaption of weights} \label{Adaption_method}
We seek to optimize the weights, $\mathbf{w}$, so that we can place less emphasis on summary statistics that are not informative for the posterior,
but also scale summary statistics appropriately so that we do not neglect to obtain information about certain parameters.
We do this within the ABC-SMC framework \cite{del2006sequential,sisson2007sequential,toni2009approximate} given in Algorithm 1. 
We outline our proposed methodology in Algorithm 2.

At each generation, we search for the weights, $\mathbf{w}$, of the distance function that maximize the distance between the prior and resulting posterior,
given $N$ ABC samples from the model for different $\theta$ values. 
This distance between prior and posterior gives a measure of the information gain in moving from the prior to the posterior.

\subsection{Distance between distributions}
We use the Hellinger distance to measure the discrepency between the prior and posterior, and so to detect the optimality of our posterior.
The Hellinger distance is defined, for distributions $P$ and $Q$, with densities $p$ and $q$, respectively, as
\begin{align*}
 H^{2}(P,Q) &= {\frac{1}{2}}\int \left({\sqrt{p(x)}}-{\sqrt{q(x)}}\right)^{2}\,\mathrm{d}x \\
	    &= 1 - \int \sqrt{p(x)q(x)} \, \mathrm{d} x.
\end{align*}
Alternative measures of distance between distributions such as the Euclidean distance or Kullback-Leibler (or KL) divergence can be used.
In our experience, the Hellinger distance performs better than alternatives, particularly for robustly identifying relatively small differences between posterior distributions when weights are optimized,
an observation that is supported by other work \cite{jones2015inference}.
In addition, the Hellinger distance is finite when comparing distributions with different support (unlike the KL divergence). 
This property is desirable when comparing a broad prior with a posterior distribution where we have gained some knowledge of parameter space and can exclude certain regions.

\subsubsection{Nearest neighbour distance estimator} \label{knn_estimator}
To estimate the distance between two distributions, based on samples from these distributions, we use a $k$ nearest neighbour estimator \cite{poczos2011estimation,poczos2012nonparametric,sutherland2012kernels}
developed to describe a family of distances between distributions known as $\alpha$ divergences, of which the Hellinger distance is a special case. 
Suppose we have two probability distributions, $P$ and $Q$ with densities $p(x)$, and $q(x)$, and are interested in the distance between these.
We suppose that we have some samples, $X_{1:n}$ and $Y_{1:n}$ from $p$ and $q$.
If we define $ D_{\alpha} (p||q) = \int p^{\alpha}(x) q^{1-\alpha}(x) \, \mathrm{d} x$ for $\alpha \in \mathbb{R}$, then the Hellinger distance is
$$ D_h (p||q) = 1 - D_{1/2}(p||q). $$

The $k$ nearest neighbour estimator that we use depends only on distances between observations in a sample.
Let $\rho_k(i)$ be the Euclidean distance from the sample $X_i$ to its $k$th nearest neighbour in $X_{1:n}$.
Similarly, let $\nu_k(i)$ be the distance from $X_i$ to its $k$th nearest neighbour in the samples $Y_{1:n}$.
Then the estimator \cite{poczos2011estimation} is given by
\begin{equation} \label{d_hat}
 \hat{D_{\alpha}}\left(X_{1:n} || Y_{1:n}\right) = \frac{1}{n} \sum_{i=1}^n \left( \frac{(n-1)\rho_k(i)}{n\nu_k(i)} \right)^{1-\alpha} B_{k,\alpha},
\end{equation}
where $$B_{k,\alpha} = \frac{\Gamma(k)^2}{\Gamma(k-\alpha+1)\Gamma(k+\alpha-1)}.$$

At each generation of ABC-SMC we seek to find weights $\mathbf{w} \in E \subset \mathbb{R}^{\kappa}$ such that
\begin{equation}
\mathbf{w}^* = \argmax_{\mathbf{w} \in E} \left(1 - \hat{D}_{\alpha} \left( \{\xi_i\}_{i=1}^M || \{\theta_i\}_{i=1}^M \right) \right),
\end{equation}
where $\{\xi_i\}_{i=1}^M$ are samples from the prior distribution, and  $\{\theta_i\}_{i=1}^M$ are samples from the approximate posterior distribution, which depends on the summary statistic weights, $\mathbf{w}$.

To perform the optimization in weight space in our implementation of Algorithm 2, we use a constrained nonlinear optimizer, implemented via $\tt{fmincon}$ in MATLAB \cite{MATLAB:2016}.

\begin{algorithm}[h!]
\floatname{algorithm}{Algorithm 2}
\renewcommand{\thealgorithm}{}
\caption{Adaption of distance weights for ABC}
\label{protocol1}
\begin{algorithmic}[1]
\STATE Set generation index $t=1$. \\
\FOR{$i$ = 1 to $N$} 
  \STATE Sample from prior $\theta^{**} \sim \pi(\theta)$. \\
  \STATE Simulate dataset $\mathbf{x}^{*i} \sim f(\mathbf{x}|\theta^{**}) $. 
  \STATE Set $\theta^i_t = \theta^{**}.$ \\
  Calculate the weight $v^i_t$ for particle $\theta^i_t$ as $ v^i_t = 1 $.
\ENDFOR
\STATE Let $$L(\mathbf{w}) = 1 - \hat{D}_{\alpha} \left( \left\{ \xi^i \right\}_{i=1}^M || \left\{ \theta_t^i \right\}_{i=1}^M \right),$$
where $\left\{ \theta_t^i \right\}_{i=1}^M$ are the closest $M=\left \lfloor{\alpha N} \right \rfloor$ samples when ranked according to ABC distance from the pseudo dataset, $d_{\mathbf{w}}(s(\mathbf{y}),s(\mathbf{x}^{i*}))$.
\STATE Maxmize $L(\mathbf{w})$ as a function of summary statistic weights, $\mathbf{w}$.  
\STATE Keep the samples $\left\{ \theta_t^i \right\}_{i=1}^M$ corresponding to the maxmimum of $L(\mathbf{w})$ (i.e. the maximum distance between prior and approximate posterior).\\
\STATE Normalize the particle weights $\left\{ v_t^i \right\}_{i=1}^M$. \\
\FOR{$t$ = 2 to $T$}
\FOR{$i$ = 1 to $N$}
  \STATE \label{step3} Sample $\theta^{*}$ from previous population ${\theta_{t-1}^{i}}$ with weights $v_{t-1}$. \\
  Peturb $\theta^*$ to give $\theta^{**} \sim K_t(\theta|\theta^*)$.
  If $\pi(\theta^{**}) = 0$, return to step 11.
  \STATE \label{step4} Simulate dataset $\mathbf{x}^{*i} \sim f(\mathbf{x}|\theta^{**}) $. 
  \STATE Set $\theta^i_t = \theta^{**}.$ \\
  Calculate the weight $v^i_t$ for particle $\theta^i_t$ as
  \begin{equation*}
    v^i_t = \frac{\pi (\theta_t^i)}{\sum_{j=1}^M v_{t-1}^j K_t (\theta_{t-1}^j, \theta_t^i)}.
  \end{equation*}
\ENDFOR
\STATE Let $$L(\mathbf{w}) = 1 - \hat{D}_{\alpha}\left(\left\{ \xi^i \right\}_{i=1}^M || \left\{ \theta_t^i \right\}_{i=1}^M\right),$$
where $\left\{ \theta_t^i \right\}_{i=1}^M$ are the closest $M=\left \lfloor{\alpha N} \right \rfloor$ samples when ranked according to ABC distance from the pseudo dataset, $d_{\mathbf{w}}(s(\mathbf{y}),s(\mathbf{x}^{i*}))$.
\STATE Maxmize $L(\mathbf{w})$ as a function of summary statistic weights, $\mathbf{w}$.  
\STATE Keep the samples $\left\{ \theta_t^i \right\}_{i=1}^M$ corresponding to the maxmimum of $L(\mathbf{w})$ (i.e. the maximum distance between prior and approximate posterior).
\STATE Normalize the particle weights $\left\{ v_t^i \right\}_{i=1}^M$. \\
\ENDFOR
\STATE \textbf{return} $\{\theta_T^i\}_{i=1}^M$, $\{v_T^i\}_{i=1}^M$
\end{algorithmic}
\end{algorithm}

\section{Theoretical justification} \label{justification}

The estimator we use is a $k$ nearest neighbour estimator relying only on distances between observations in a sample, as described above in Section \ref{knn_estimator}.
We note that although the Hellinger distance, $D_h(p||q)$, is symmetric in $p$ and $q$, the estimator above in eq. \eqref{d_hat} is not.
By using the estimator from eq. \eqref{d_hat} and choosing $q$ as the distribution that depends on the parameters, $\mathbf{w}$, we are able to make strong assumptions about $p$ independent of the parameters, $\mathbf{w}$, and make weaker assumptions about $q$.
In the context of our algorithm for ABC, this allows us to treat $p$ as the prior and $q$ as the approximate posterior distribution.

We will require the following results:

\begin{lemma}
\citet{poczos2011estimation} \\
Suppose that $k\ge 2$ and that $\mathcal{M} = \text{supp}(p)$.
Assume that (a) $q$ is bounded above, (b) $p$ is bounded away from zero, (c) $p$ is uniformly Lebesgue approximable \footnote{\begin{definition}
(Uniformly Lebesgue-approximable function). Let $g \in L_1(E)$ for $E \subset \mathbb{R}^d$. $g$ is \textit{uniformly Lebesgue approximable} on $E$ if, for any sequence $R_n \rightarrow 0$ and any $\delta >0$, $\exists \, n=n_0(\delta) \in \mathbb{Z}^+ $ (independent of $x$) such that
if $n>n_0$, then for almost all $x \in E$,
$$g(x) - \delta < \frac{\int_{\mathcal{B}(x,R_n)\cap E} g(t) \mathrm{d}t}{\mathcal{V}(\mathcal{B}(x,R_n)\cap E)} < g(x) + \delta, $$
where $\mathcal{B}(x,R)$ is the closed ball around point $x \in \mathbb{R}^d$ with radius $R$, and $\mathcal{V}(\mathcal{B}(x,R))$ is the volume of the ball.
\end{definition}},
(d) $\exists \, \delta_0$ such that $\forall \delta \in (0,\delta_0) \, \int_{\mathcal{M}} H(x,p,\delta,1/2)p(x) \mathrm{d}x < \infty,$
(e) $\int_{\mathcal{M}}||x-y||^{\gamma} p(y) \mathrm{d}y < \infty$ for almost all $x \in \mathcal{M}$, $\int \int_{\mathcal{M}^2}||x-y||^{\gamma} p(y) p(x) \mathrm{d}y \mathrm{d} x < \infty$, where
$H(x,p,\delta,\psi) = \\ \sum_{j=0}^{k-1} \left(\frac{1}{j!}\right)^{\psi} \Gamma(1 - \alpha + j\psi) \left( \frac{p(x)+\delta}{p(x)-\delta}\right)^{j\psi} (p(x)-\delta)^{-(1-\alpha)}\left((1-\delta)\psi \right)^{-(1-\alpha)-j\psi}.$ \newline 
Then 
\begin{equation} \label{asymptoticly_unbiased}
\lim_{n \rightarrow \infty} \E\left[ \left(\hat{D}_h\left( X_{1:n} || Y_{1:n}\right) - D_h\left(p || q\right) \right)^2 \right] = 0.
\end{equation}
\end{lemma}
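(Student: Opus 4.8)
The plan is to follow the standard route for mean-square consistency of $k$-nearest-neighbour divergence estimators used in \citet{poczos2011estimation}, namely the bias--variance decomposition
\begin{equation*}
\E\left[\left(\hat{D}_h - D_h\right)^2\right] = \operatorname{Var}\left(\hat{D}_h\right) + \left(\E\left[\hat{D}_h\right] - D_h\right)^2,
\end{equation*}
and to show that each term vanishes as $n\to\infty$. Since $\hat{D}_h = 1 - \hat{D}_{1/2}$ and $D_h = 1 - D_{1/2}$, it suffices to prove asymptotic unbiasedness and vanishing variance for $\hat{D}_\alpha$ at $\alpha = 1/2$.

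For the bias I would first condition on $X_i = x$ and treat the two nearest-neighbour distances separately. The key probabilistic input is the Poisson-process approximation for nearest-neighbour distances: as $n\to\infty$, the normalised ball volumes $(n-1)\,c_d\,p(x)\,\rho_k(i)^d$ and $n\,c_d\,q(x)\,\nu_k(i)^d$ converge in distribution to \emph{independent} $\operatorname{Gamma}(k,1)$ variables $U$ and $V$, where $c_d$ is the volume of the unit ball; independence holds because, given $X_i=x$, the former depends only on the $X$-sample and the latter only on the $Y$-sample. This is where assumptions (b) and (c) enter, guaranteeing that the local mass of $p$ and $q$ in a shrinking ball is approximated by $p(x)$ and $q(x)$ \emph{uniformly} in $x$. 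Rewriting the summand as $\left(\frac{U}{V}\right)^{1-\alpha}\left(\frac{q(x)}{p(x)}\right)^{1-\alpha}$ and using the fractional Gamma moments $\E[U^{1-\alpha}]=\Gamma(k+1-\alpha)/\Gamma(k)$ and $\E[V^{\alpha-1}]=\Gamma(k+\alpha-1)/\Gamma(k)$, I would verify that $B_{k,\alpha}$ is exactly the reciprocal of their product, so that the limiting conditional expectation of the summand equals $(q(x)/p(x))^{1-\alpha}$. Integrating against $p$ then yields $\int p^\alpha q^{1-\alpha}\,\mathrm{d}x = D_\alpha(p\|q)$. The requirement $k\ge 2$, together with $\alpha=1/2$, keeps the Gamma arguments $k\pm\alpha\mp1$ positive so these moments are finite.

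The variance would be handled by writing $\hat{D}_\alpha$ as an average of $n$ identically distributed summands and bounding their covariances. Each summand is a local statistic, and a given point $X_i$ is among the $k$ nearest neighbours of only $O(1)$ other sample points — a standard geometric fact in $\mathbb{R}^d$ controlled by the kissing number — so well-separated summands are asymptotically independent; combined with a uniform second-moment bound on each summand this gives $\operatorname{Var}(\hat{D}_\alpha)=o(1)$.

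The main obstacle I anticipate is not the pointwise Gamma limit, which is clean, but upgrading convergence in distribution of the ratio $U/V$ to convergence of its $(1-\alpha)$th moment — that is, establishing uniform integrability of the summands. The delicate contributions are the negative moment arising when $\nu_k(i)$ is atypically small (clusters of the $Y$-sample near $X_i$) and the behaviour near the boundary of $\mathcal{M}=\operatorname{supp}(p)$, either of which can destroy moment convergence without further control. This is precisely the role of the remaining hypotheses: (a) bounds $q$ from above, (d) supplies the finite dominating integral $\int_\mathcal{M} H(x,p,1/2,\delta)\,p(x)\,\mathrm{d}x<\infty$ furnishing a uniform envelope for a dominated-convergence argument, and (e) controls the polynomial tails governing points far from $x$. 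Once uniform integrability is secured, the interchange of limit and integral is justified, the bias tends to zero, and together with the vanishing variance this establishes the claimed mean-square convergence.
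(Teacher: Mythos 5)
Your proposal is correct in outline and follows essentially the same route as the paper, which does not prove this lemma itself but cites \citet{poczos2011estimation} and summarises the proof in one sentence: constructing an integrable dominating function so that Lebesgue's dominated convergence theorem can be applied. Your sketch correctly identifies exactly this as the crux --- condition (d) supplying the envelope that upgrades the distributional Gamma limit to moment convergence --- together with the standard bias--variance decomposition and the verification that $B_{k,\alpha}$ is the reciprocal of the product of the fractional Gamma moments, so the approaches coincide.
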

The proof of this results relies on constructing an integrable function as a bound such that Lebesgue's dominated convergence theorem can be applied.
See \citet{poczos2011estimation} for details.
Lemma 1 specifies $L_2$ consistency of the nearest neighbour estimator, which ensures that the estimates of the distance between $p$ and $q$ become more concentrated around the true values as more samples are used.

 \begin{lemma}
  Let $X_n$ and $X$ be random variables in $\mathbb{R}^d$. \newline
  If $$\lim_{n\rightarrow \infty} \E \left[ ||X_n - X ||^2\right] = 0,$$
  then, for any $\epsilon >0$, $$\lim_{n \rightarrow \infty} \mathbb{P} \left( ||X_n - X|| > \epsilon \right) = 0.$$ 
  That is, $L_2$ convergence implies convergence in probability.
 \end{lemma}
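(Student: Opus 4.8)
The plan is to recognize this as the standard fact that mean-square ($L_2$) convergence implies convergence in probability, and to prove it directly via Markov's inequality applied to the non-negative random variable $||X_n - X||^2$. No structural machinery beyond this single inequality is required, so the argument is short and self-contained.

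First I would fix an arbitrary $\epsilon > 0$ and rewrite the event of interest in a form amenable to Markov's inequality. Since the norm is non-negative, the events $\{||X_n - X|| > \epsilon\}$ and $\{||X_n - X||^2 > \epsilon^2\}$ coincide, so
\begin{equation*}
\mathbb{P}\left( ||X_n - X|| > \epsilon \right) = \mathbb{P}\left( ||X_n - X||^2 > \epsilon^2 \right).
\end{equation*}
Applying Markov's inequality to the non-negative random variable $||X_n - X||^2$ with threshold $\epsilon^2 > 0$ then yields
\begin{equation*}
\mathbb{P}\left( ||X_n - X||^2 > \epsilon^2 \right) \le \frac{\E\left[ ||X_n - X||^2 \right]}{\epsilon^2}.
\end{equation*}

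Finally I would take the limit $n \to \infty$. Because $\epsilon$ is fixed and $\E[||X_n - X||^2] \to 0$ by hypothesis, the right-hand side tends to zero; since probabilities are bounded below by zero, the squeeze theorem forces $\mathbb{P}(||X_n - X|| > \epsilon) \to 0$, which is the claim. As $\epsilon > 0$ was arbitrary, this establishes convergence in probability.

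There is no genuine obstacle here: the only points meriting care are that Markov's inequality requires a non-negative random variable (satisfied by $||X_n - X||^2$) and that the relevant expectations be finite, the latter being implicit in the hypothesis that $\E[||X_n - X||^2]$ converges to a finite limit. This lemma is precisely the bridge needed to upgrade the $L_2$ consistency established in Lemma 1 into a convergence-in-probability statement for the nearest neighbour distance estimator.
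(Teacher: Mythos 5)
Your proof is correct and follows essentially the same route as the paper: the paper's one-line proof cites Chebyshev's inequality to obtain $\mathbb{P}\left( ||X_n - X|| > \epsilon \right) \le \E\left[ ||X_n - X||^2 \right]/\epsilon^2$, which is exactly the bound you derive by applying Markov's inequality to $||X_n - X||^2$. The two arguments are the same inequality under different names, and your concluding limit step is the intended finish.
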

\begin{proof}
 By Chebyshev's inequality, for any $\epsilon>0$
 $$
 \mathbb{P} \left( ||X_n - X||>\epsilon \right) \le \frac{\E \left[ ||X_n -X||^2 \right]}{\epsilon^2}.
 $$
\end{proof}

\begin{theorem}
 Assume $E$ is finite, $|E| = \chi$.
 Assume the conditions of Lemma 1 hold for distributions $p$ and $q^{(\mathbf{w})}$.
 Assume that a unique $\mathbf{w}^*$ maximises $D_h\left(p || q^{(\mathbf{w})} \right)$ and arrange parameter values $\mathbf{w}^j$ for $j \in \{1, \dots , \chi\}$ in order such that they are descending in $D_h\left(p || q^{(\mathbf{w})} \right)$.
 That is $\mathbf{w}^1 = \mathbf{w}^*$, using $\mathbf{w}^2$ gives the next biggest value and so on.
 Then 
 \begin{equation}
\lim_{n \rightarrow \infty} \mathbb{P} \left( \argmax_{\mathbf{w} \in S} \hat{D}_h \left( X_{1:n} || Y_{1:n}^{(\mathbf{w})} \right) = \mathbf{w}^* \right) = 1.
 \end{equation}

\end{theorem}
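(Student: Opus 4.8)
The plan is to leverage the $L_2$ consistency of the estimator $\hat{D}_h$ (Lemma 1), upgrade it to convergence in probability (Lemma 2), and then exploit the finiteness of $E$ together with the uniqueness of the maximizer to pin down the argmax via a union bound. The key structural fact is that, since $E$ is finite and $\mathbf{w}^*$ is the \emph{unique} maximizer of the true distance $D_h(p||q^{(\mathbf{w})})$, there is a strictly positive gap between the largest value and the next-largest; the estimated argmax can then fail to coincide with $\mathbf{w}^*$ only if one of the finitely many estimates deviates from its target by more than half of this gap.

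First I would fix the gap
$$\delta := \frac{1}{2}\left( D_h\bigl(p \,||\, q^{(\mathbf{w}^1)}\bigr) - D_h\bigl(p \,||\, q^{(\mathbf{w}^2)}\bigr) \right),$$
which is strictly positive because $\mathbf{w}^1 = \mathbf{w}^*$ is the unique maximizer and the $\mathbf{w}^j$ are arranged in descending order of $D_h$. By this ordering, every $j \ge 2$ satisfies $D_h(p||q^{(\mathbf{w}^j)}) \le D_h(p||q^{(\mathbf{w}^1)}) - 2\delta$.

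Next I would introduce the favourable event
$$A_n := \Bigl\{ \bigl| \hat{D}_h\bigl(X_{1:n} \,||\, Y_{1:n}^{(\mathbf{w}^j)}\bigr) - D_h\bigl(p||q^{(\mathbf{w}^j)}\bigr) \bigr| < \delta \ \text{ for all } j \Bigr\}.$$
On $A_n$ the estimate at $\mathbf{w}^1$ exceeds $D_h(p||q^{(\mathbf{w}^1)}) - \delta$, whereas for each $j \ge 2$ the estimate at $\mathbf{w}^j$ is strictly below $D_h(p||q^{(\mathbf{w}^j)}) + \delta \le D_h(p||q^{(\mathbf{w}^1)}) - \delta$; hence $\hat{D}_h$ is strictly largest at $\mathbf{w}^1 = \mathbf{w}^*$, so the estimated argmax equals $\mathbf{w}^*$. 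This yields the inclusion $\{\argmax = \mathbf{w}^*\} \supseteq A_n$, and passing to complements with a union bound gives
$$\mathbb{P}\bigl( \argmax \ne \mathbf{w}^* \bigr) \le \sum_{j=1}^{\chi} \mathbb{P}\Bigl( \bigl| \hat{D}_h\bigl(X_{1:n} \,||\, Y_{1:n}^{(\mathbf{w}^j)}\bigr) - D_h\bigl(p||q^{(\mathbf{w}^j)}\bigr) \bigr| \ge \delta \Bigr).$$
Since the hypotheses of Lemma 1 hold for each $q^{(\mathbf{w}^j)}$, every summand tends to zero by Lemma 1 followed by Lemma 2, and because $\chi < \infty$ the finite sum tends to zero, which establishes the claim.

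The argument is essentially routine once Lemmas 1 and 2 are in hand; the only genuine subtleties are ensuring $\delta > 0$, which requires \emph{both} the finiteness of $E$ and the uniqueness of $\mathbf{w}^*$, and observing that the union bound needs the Lemma 1 hypotheses to hold simultaneously at every $q^{(\mathbf{w}^j)}$ — precisely what the theorem assumes. The main obstacle would appear only if $E$ were infinite: a pointwise-in-$\mathbf{w}$ union bound would no longer suffice, and one would instead need uniform convergence of $\hat{D}_h$ over the weight space, a substantially harder requirement.
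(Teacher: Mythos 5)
Your proof is correct and follows essentially the same route as the paper's: $L_2$ consistency of $\hat{D}_h$ (Lemma 1) upgraded to convergence in probability (Lemma 2), combined with the strictly positive gap between $D_h\bigl(p||q^{(\mathbf{w}^1)}\bigr)$ and $D_h\bigl(p||q^{(\mathbf{w}^2)}\bigr)$ that uniqueness and finiteness of $E$ guarantee. Your write-up is in fact tighter than the paper's own proof, which takes $\delta$ smaller than the full gap rather than half of it and never states the union bound over the $\chi$ elements of $E$ needed to make the deviation events hold simultaneously --- both details you handle explicitly and correctly.
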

\begin{proof}
 Let $\epsilon >0$.
 Take $\delta < D_h\left(p || q^{(\mathbf{w}^*)} \right) - D_h\left(p || q^{(\mathbf{w}^2)} \right)$.  
 Using Lemma 1, we have $L_2$ convergence for the estimator $\hat{D}_h$ and, via Lemma 2, this implies convergence in probability.
 Therefore $\exists M \in \mathbb{N}$ such that $\forall n \ge M$ 
 $$ \mathbb{P}\left( \left|\hat{D}_h \left(X_{1:n} || Y_{1:n}^{(\mathbf{w})} \right) - D_h \left(p || q^{(\mathbf{w})} \right) \right| > \delta \right) < \epsilon. $$
 Therefore $\forall n \ge M$ $$ \mathbb{P} \left( \argmax_{\mathbf{w} \in E} \hat{D}_h \left( X_{1:n} || Y_{1:n}^{(\mathbf{w})} \right) = \mathbf{w}^* \right) > 1-\epsilon .$$
\end{proof}

If we make the (generally unrealistic) assumption that the space of possible parameters, $S$, is finite, then we are able to show that, in the limit of a large number of samples, we can recover the optimum parameters with probability 1.
We next explore how this can be extended to a compact, continuous space of parameters, $S$, provided we make assumptions about the structure of dependence of the distance estimator on parameters $\mathbf{w} \in E$.
We show that assumptions about this dependence structure of the estimator on the parameters can be justified by considering some of the details of Algorithm 2. 

\begin{lemma} \label{lemma:piecewise_const}
Consisder the estimator of the Hellinger distance as a function of parameters, $\mathbf{w}$, such that $$L(\mathbf{w}) = \hat{D}_h \left(X_{1:n} || Y_{1:n}^{(\mathbf{w})}\right). $$
Then $L(\mathbf{w})$ is piecewise constant with respect to parameters $\mathbf{w} \in E$, with finitely many discontinuities.
\end{lemma}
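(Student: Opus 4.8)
The plan is to exploit the fact that, for a fixed generation of Algorithm 2, the pool of $N$ simulated parameter--dataset pairs is \emph{fixed} and does not depend on $\mathbf{w}$; the only way $\mathbf{w}$ enters $L(\mathbf{w})$ is through which $M = \lfloor \alpha N \rfloor$ of these samples are retained as the approximate posterior $Y_{1:n}^{(\mathbf{w})}$. First I would observe that the weighted ABC distance
$$d_{\mathbf{w}}(s(\mathbf{y}), s(\mathbf{x}^{i*})) = \sum_{l=1}^{\kappa} w_l\, (s_l(\mathbf{y}) - s_l(\mathbf{x}^{i*}))^2$$
is linear in $\mathbf{w}$ for each fixed simulated dataset $\mathbf{x}^{i*}$, since the coefficients $(s_l(\mathbf{y}) - s_l(\mathbf{x}^{i*}))^2$ are constants. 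The retained set is determined purely by the \emph{ranking} of these $N$ numbers, and this ranking can change only when two of them cross, i.e. when $\mathbf{w}$ lies on one of the sets
$$H_{ij} = \left\{ \mathbf{w} \in E : d_{\mathbf{w}}(s(\mathbf{y}), s(\mathbf{x}^{i*})) = d_{\mathbf{w}}(s(\mathbf{y}), s(\mathbf{x}^{j*})) \right\}.$$
Each $H_{ij}$ is the intersection of $E$ with a hyperplane through the origin, and there are at most $\binom{N}{2}$ of them.

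Next I would argue that this finite family of hyperplanes partitions $E$ into finitely many open cells, and that on the interior of each cell the ordering of the $N$ distances is strict and constant, so the selected index set of size $M$ is constant there. Consequently the approximate posterior sample $Y_{1:n}^{(\mathbf{w})}$ is literally the same finite point set throughout the cell. Since the prior samples $X_{1:n} = \{\xi^i\}$ are fixed, the intra-sample nearest-neighbour distances $\rho_k(i)$ do not depend on $\mathbf{w}$ at all; and because $Y_{1:n}^{(\mathbf{w})}$ is constant on the cell, the cross-sample distances $\nu_k(i)$ from each $X_i$ to its $k$-th nearest neighbour in $Y_{1:n}^{(\mathbf{w})}$ are also constant there. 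Substituting these constant values of $\rho_k(i)$ and $\nu_k(i)$ into the estimator \eqref{d_hat} shows that $L(\mathbf{w}) = \hat{D}_h(X_{1:n} || Y_{1:n}^{(\mathbf{w})})$ takes a single value on each cell. Hence $L$ is piecewise constant, taking at most as many distinct values as there are cells, and its set of discontinuities is contained in the finite union $\bigcup_{i<j} H_{ij}$.

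The step I expect to require the most care is the behaviour of the selection map on the boundary hyperplanes themselves. On the interior of a cell all pairwise distances are distinct, so the top-$M$ set is unambiguous; but on an $H_{ij}$ two candidates are tied exactly at the retention threshold, and the tie-breaking rule of the implementation determines the value there. This does not affect the conclusion, since the boundaries form a Lebesgue-null finite union of hyperplane pieces and therefore contribute only finitely many separating surfaces; one simply has to note that ``finitely many discontinuities'' should be read as this finite hyperplane arrangement rather than as isolated points. A minor additional check is that $k \ge 2$ and $M \ge k$, so that $\nu_k(i)$ is well defined on every cell, which holds once $N$ is large enough for the algorithm to run.
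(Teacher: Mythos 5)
Your proof is correct, and it takes a genuinely different route from the paper's. The paper argues \emph{locally}: it expands $\Sigma_{\mathbf{w}_0+\epsilon} = \Sigma_{\mathbf{w}_0} + \Sigma_{\epsilon}$, shows the perturbed distance is $d_{\mathbf{w}_0} + A + A^\top + \mathcal{O}(\epsilon^2)$ with $A$ linear in $\epsilon$, and concludes that as long as the gap between the $M$-th and $(M+1)$-st ranked distances exceeds this perturbation, the selected subset — and hence $L$ — is unchanged, with jumps occurring where the gap condition fails. You instead argue \emph{globally}, via the arrangement of tie sets $H_{ij}$: the selection map depends on $\mathbf{w}$ only through the ranking of the $N$ distances, the ranking is constant on the interior of each cell of the arrangement, and on each cell $Y_{1:n}^{(\mathbf{w})}$ is literally the same point set so $\rho_k(i)$, $\nu_k(i)$, and hence $L(\mathbf{w})$ are constant. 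The key insight (that $\mathbf{w}$ enters only through which samples survive the top-$M$ cut) is the same, but your version buys two things the paper's does not: a precise description of the discontinuity set as a finite union of codimension-one surfaces (the paper's claim that jumps occur ``finitely many times corresponding to the finite number $N$ of points'' is left vague), and an explicit treatment of ties at the retention threshold. One caveat: you take the distance to be linear in $\mathbf{w}$, consistent with the definition $d_{\mathbf{w}} = \sum_i w_i(s_{1i}-s_{2i})^2$ in Section 2, so your $H_{ij}$ are genuinely hyperplanes through the origin; the paper's own proof of this lemma uses $w_i^2$ weights, under which the $H_{ij}$ become quadric cones $\bigl\{\mathbf{w} : \sum_l w_l^2\,(a_{il}-a_{jl}) = 0\bigr\}$ rather than hyperplanes. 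This does not harm your argument — finitely many algebraic hypersurfaces still partition $E$ into finitely many cells on which the ranking is constant — but the phrase ``hyperplane'' should be weakened accordingly if the quadratic parametrization is used.
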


\begin{proof}
Suppose that, according to Algorithm 2, at generation $t$, we have generated pseudo data $\{\mathbf{x}^{i*}\}_{i=1}^N$,
which we summarise via summary statistics $s(\mathbf{x}^{i*})=\left( s_1, \ldots , s_{\kappa}\right) \in \mathbb{R}^{\kappa}$ and $\kappa$ is the number of summary statistics used to summarise the model output.
The parameters $\mathbf{w} \in \mathbb{R}^{\kappa}$ are summary statistic weights, and these are used within a weighted Euclidan distance function
\begin{align*}
  d_{\mathbf{w}}\left(s(\mathbf{x}),s(\mathbf{y})\right) &= \sum_{i=1}^\kappa w_i^2 (s_i - s_{\text{obs}_i})^2 \\
  &= (s(\mathbf{x})-s(\mathbf{y}))^\top \Sigma_w^\top \Sigma_w (s(\mathbf{x})-s(\mathbf{y})) \\
  &= \left(\Sigma_w s(\mathbf{x})- \Sigma_w s(\mathbf{y})\right)^\top \left( \Sigma_w s(\mathbf{x}) - \Sigma_w s(\mathbf{y}) \right),  
\end{align*}
where $\Sigma_w = \text{diag}(\mathbf{w})$, to compare the pseudo data with observed data.
We note that this is equivalent to stretching the space in which the pseudo data lies via the matrix $\Sigma_w$, and using the usual Euclidean distance.

Consider a small perturbation in parameter space $\mathbf{w}=\mathbf{w}_0+\epsilon$, with $||\epsilon|| \ll 1$. 
Then
$$
\Sigma_{\mathbf{w}} = \text{diag}(\mathbf{w}_0+\epsilon) = \text{diag}(\mathbf{w}_0) + \text{diag}(\epsilon) = \Sigma_{\mathbf{w}_0} + \Sigma_{\epsilon}.
$$
Using this decomposition of $\Sigma_\mathbf{w}$ gives, for the weighted Euclidean distance,
\begin{align*}
 d_{\mathbf{w}_0+\epsilon}(s(\mathbf{x}),s(\mathbf{y})) &= \left(\left(\Sigma_{\mathbf{w}_0} + \Sigma_{\epsilon} \right) s(\mathbf{x}) - \left(\Sigma_{\mathbf{w}_0} + \Sigma_{\epsilon} \right) s(\mathbf{y})\right)^\top \\
&\, \, \, \, \, \, \, \left(\left(\Sigma_{\mathbf{w}_0} + \Sigma_{\epsilon} \right) s(\mathbf{x}) - \left(\Sigma_{\mathbf{w}_0} + \Sigma_{\epsilon} \right) s(\mathbf{y})\right) \\
&= \left( \Sigma_{\mathbf{w}_0} s(\mathbf{x})- \Sigma_{\mathbf{w}_0} s(\mathbf{y}) \right)^\top \left( \Sigma_{\mathbf{w}_0} s(\mathbf{x})- \Sigma_{\mathbf{w}_0} s(\mathbf{y}) \right) \\
&\, \, \, \, \, \, \, + \left( \Sigma_{\epsilon} s(\mathbf{x})- \Sigma_{\epsilon} s(\mathbf{y}) \right)^\top \left( \Sigma_{\mathbf{w}_0} s(\mathbf{x})- \Sigma_{\mathbf{w}_0} s(\mathbf{y}) \right) 
\\ &\, \, \, \, \, \, \, + \left( \Sigma_{\mathbf{w}_0} s(\mathbf{x})- \Sigma_{\mathbf{w}_0} s(\mathbf{y}) \right)^\top \left( \Sigma_{\epsilon} s(\mathbf{x})- \Sigma_{\epsilon} y \right)
+ \mathcal{O}(\epsilon^2) \\
&= d_{\mathbf{w}_0}(s(\mathbf{x}),s(\mathbf{y})) + A + A^\top + \mathcal{O}(\epsilon^2),
\end{align*}
where $A = \left( \Sigma_{\mathbf{w}_0} s(\mathbf{x})- \Sigma_{\mathbf{w}_0} s(\mathbf{y}) \right)^\top \left( \Sigma_{\epsilon} s(\mathbf{x})- \Sigma_{\epsilon} s(\mathbf{y}) \right) $
which is linear in $\epsilon$.
Suppose we order the pseudo-data such that $\mathbf{x}^{j*}$ is the $j$th closest point to the observed data based on $d_{\mathbf{w}_0}$.
Provided that 
\begin{equation} \label{jump_condition}
d_{\mathbf{w}_0}(s(\mathbf{x}^{n*}),s(\mathbf{y})) - d_{\mathbf{w}_0}(s(\mathbf{x}^{(n+1)*}),s(\mathbf{y})) > A + A^\top + \mathcal{O}(\epsilon^2),
\end{equation}
then making this perturbation in $\mathbf{w}$ will not change which parameter samples are selected, as the same pseudo data will remain closest to the observed data, $\mathbf{y}$.
If the same parameter samples are selected, then the value of $L(\mathbf{w})$ will remain constant as a function of $\mathbf{w}$ under the perturbation $\mathbf{w}=\mathbf{w}_0+\epsilon$.

In cases where \eqref{jump_condition} does not hold, there will be a jump discontinuity in $L(\mathbf{w})$ as different parameter samples are selected.
This will occur finitely many times corresponding to the finite number, $N$, of points in the sample of pseudo data, $\{\mathbf{x}^{i*}\}_{i=1}^N$.
\end{proof}

We confirm computationally that $L(\mathbf{w})$ is piecewise constant for the test problem described in Section \ref{toy_model}, and show this in Figure \ref{fig:toy_model_piecewise_const}.  
\begin{figure*}[h!]
\begin{center}
\includegraphics[width=\textwidth]{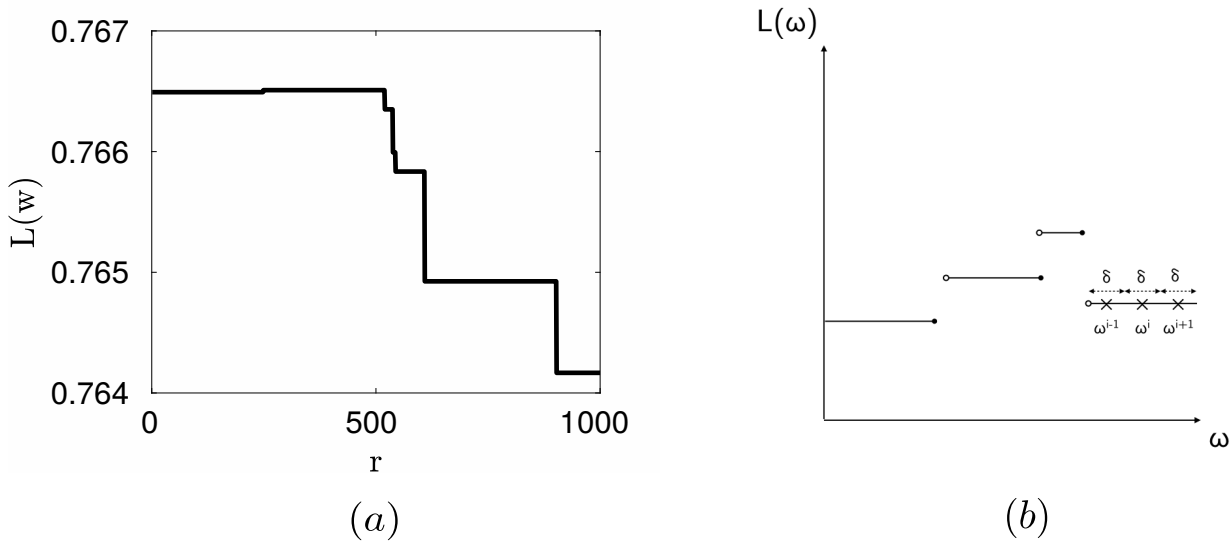}
\end{center}
\caption{The estimator $L(\mathbf{w}) = \hat{D}_h \left(X_{1:n} || Y_{1:n}^{(\mathbf{w})}\right)$ is piecewise constant for the toy model described in Section \ref{toy_model} when $X_{1:n}$ are samples from the prior and $Y_{1:n}^{(\mathbf{w})}$ are samples from the approximate posterior generated via ABC with summary statistics weights $\mathbf{w}$.
After a single generation of ABC-SMC, we optimize $L(\mathbf{w})$ as a function of $\mathbf{\mathbf{w}}$ to find a local maximum $\mathbf{w}^*$.
In (a), we then consider the value of $L(\mathbf{w})$ on a line of parameters in parameter space $\mathbf{w}=\mathbf{w}^* + 10^{-4} r \vec{\eta}$, where $\vec{\eta} \sim N(\mathbf{0},I_{\kappa})$ is a random choice of direction, $I_{\kappa}$ is the $\kappa \times \kappa$ identity matrix and $r$ parameterises a line in this direction.
$L(\mathbf{w})$ is piecewise constant in $\mathbf{w}$ as shown in Lemma \ref{lemma:piecewise_const}.
For a piecewise constant function $L(\mathbf{w})$, we can choose $\delta$ and $\mathbf{w}^j$ for $j \in \{1, \dots, \kappa \}$ such that $L(\mathbf{w})$ is locally constant.
We illustrate such a choice of $\delta$ and $\mathbf{w}^j$ in (b).
}
\label{fig:toy_model_piecewise_const}
\end{figure*}


\begin{lemma}
 Assume $S \subset \mathbb{R}^s$ is compact and that a unique $\mathbf{w}^* \in E$ maximises $D_h\left(p || q^{(\mathbf{w})} \right)$.
 Assume the conditions of Lemma 1 hold for distributions $p$ and $q^{(\mathbf{w})}$.
 Suppose $L(\mathbf{w})$ is piecewise constant in $\mathbf{w}$ with finitely many jump discontinuities. Then
 $$ \lim_{n \rightarrow \infty} \mathbb{P} \left( \argmax_{\mathbf{w} \in E} \hat{D}_h \left( X_{1:n} || Y_{1:n}^{(\mathbf{w})} \right) = \mathbf{w}^* \right) = 1. $$
\end{lemma}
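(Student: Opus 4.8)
The plan is to reduce the continuous, compact optimisation over $E$ to the finite case already settled in Theorem 1, using the piecewise-constant structure of the empirical objective established in Lemma \ref{lemma:piecewise_const}. For each fixed sample size $n$, that lemma guarantees that $\hat{D}_h(X_{1:n} || Y_{1:n}^{(\mathbf{w})})$ has only finitely many jump discontinuities as $\mathbf{w}$ ranges over $E$, so it partitions $E$ into finitely many cells $C_1, \ldots, C_\chi$ on each of which the estimator is constant. I would choose one representative $\mathbf{w}^j \in C_j$ per cell, arranging that the cell containing the population maximiser has $\mathbf{w}^*$ itself as its representative, say $\mathbf{w}^1 = \mathbf{w}^*$. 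By piecewise constancy the supremum of $\hat{D}_h$ over the whole of $E$ is then attained and equals $\max_{1 \le j \le \chi} \hat{D}_h(X_{1:n} || Y_{1:n}^{(\mathbf{w}^j)})$, so the argmax over $E$ is identified with the argmax over the finite representative set — exactly the setting of Theorem 1.

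First I would make this reduction explicit: any $\mathbf{w} \in E$ satisfies $\hat{D}_h(\mathbf{w}) = \hat{D}_h(\mathbf{w}^{j(\mathbf{w})})$ for the cell $C_{j(\mathbf{w})}$ containing it, so no point of $E$ beats every representative, and the maximising cell is the one whose representative attains $\max_j \hat{D}_h(\mathbf{w}^j)$. Since $\mathbf{w}^*$ is assumed to be the unique maximiser of the population distance, there is a strictly positive gap $\delta := D_h(p || q^{(\mathbf{w}^*)}) - \max_{j \ge 2} D_h(p || q^{(\mathbf{w}^j)}) > 0$ separating $\mathbf{w}^*$ from the other representatives; this is the $\delta$ depicted in Figure \ref{fig:toy_model_piecewise_const}(b). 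Invoking Lemma 1 for $p$ and each $q^{(\mathbf{w}^j)}$ gives $L_2$ consistency, and Lemma 2 upgrades this to convergence in probability of $\hat{D}_h(\mathbf{w}^j)$ to $D_h(p || q^{(\mathbf{w}^j)})$. A union bound over the finitely many representatives then shows that, with probability tending to one, $|\hat{D}_h(\mathbf{w}^j) - D_h(p || q^{(\mathbf{w}^j)})| < \delta/2$ holds simultaneously for all $j$, forcing $\hat{D}_h(\mathbf{w}^1) > \hat{D}_h(\mathbf{w}^j)$ for every $j \ge 2$. Hence the empirical maximiser lies in the cell of $\mathbf{w}^*$, whose representative is $\mathbf{w}^*$, and the stated probability tends to one — mirroring the proof of Theorem 1, now applied cell by cell.

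The main obstacle is that the partition $\{C_j\}$, and hence both the number of representatives $\chi$ and the gap $\delta$, depends on $n$: the cells are cut out by the ranking of the $N \asymp n/\alpha$ pseudo-data points, so $\chi = \chi(n) \to \infty$ and the cells shrink as $n$ grows. A union bound over a number of terms that itself grows with $n$ need not vanish, so the finite argument above must be made uniform in $n$. The cleanest route I would pursue is to establish the uniform statement $\sup_{\mathbf{w} \in E} |\hat{D}_h(X_{1:n} || Y_{1:n}^{(\mathbf{w})}) - D_h(p || q^{(\mathbf{w})})| \to 0$ in probability, combined with a well-separatedness condition on the population objective, namely $D_h(p || q^{(\mathbf{w}^*)}) > \sup\{ D_h(p || q^{(\mathbf{w})}) : \mathbf{w} \in E, \|\mathbf{w} - \mathbf{w}^*\| \ge r \}$ for every $r > 0$; together these confine the empirical maximiser to an arbitrarily small neighbourhood of $\mathbf{w}^*$. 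Alternatively, one restricts the union bound to the bounded number of cells whose population value lies within $\delta$ of the maximum, exploiting continuity of $\mathbf{w} \mapsto D_h(p || q^{(\mathbf{w})})$ and uniqueness of $\mathbf{w}^*$ so that only the cell meeting $\mathbf{w}^*$ stays competitive for all large $n$. A final, more cosmetic, point is that $\argmax_{\mathbf{w} \in E} \hat{D}_h$ is set-valued — an entire cell rather than a single point — so the equality to $\mathbf{w}^*$ in the statement is to be read under the representative convention that designates $\mathbf{w}^*$ as the representative of its cell; with that convention the identity holds as written.
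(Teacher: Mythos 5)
Your main argument is the paper's own proof: the paper covers $E$ with finitely many representatives $\mathbf{w}^j$, $j \in \{1,\dots,\chi\}$, such that $\|\mathbf{w}-\mathbf{w}^j\|<\delta$ implies $L(\mathbf{w}) = L(\mathbf{w}^j)$ by local constancy, and then applies the finite-parameter-space theorem (whose proof is exactly your gap-plus-convergence-in-probability argument, via Lemmas 1 and 2) to the finite set of representatives. So on the core reduction you and the paper coincide.

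Where you go beyond the paper is your third paragraph, and the concern you raise there is genuine rather than pedantic: the cells of constancy are determined by the ranking of the realized pseudo-data, so the partition, the count $\chi$, and the values $L_j$ are all random and change with $n$, and a union bound over $\chi(n)\to\infty$ representatives does not obviously close. The paper's proof simply treats the representatives as a fixed finite set and cites the finite-case theorem, with no discussion of this sample-dependence; it also passes silently over the point you flag last, that $\argmax_{\mathbf{w}\in E}\hat{D}_h$ is set-valued (a whole cell), so equality with $\mathbf{w}^*$ needs a convention. Your proposed repairs --- uniform consistency $\sup_{\mathbf{w}\in E}\bigl|\hat{D}_h\left(X_{1:n} || Y_{1:n}^{(\mathbf{w})}\right) - D_h\left(p || q^{(\mathbf{w})}\right)\bigr| \to 0$ in probability combined with well-separation of the population maximiser, or restricting attention to the cells that remain competitive --- are the standard M-estimation devices that a fully rigorous version would require, and neither appears in the paper. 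In short, your proposal follows the paper's route and is more careful than the paper's own proof; the additional work you sketch would be needed to make the published argument watertight, not to make yours match it.
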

\begin{proof}
We can choose a $\delta >0$ and finitely many $\mathbf{w}^j$, $j \in \{1, \dots, \chi \}$ such that every point $\mathbf{w}$ is within a ball of radius $\delta$ from some $\mathbf{w}^j$ (see Figure \ref{fig:toy_model_piecewise_const} and note that $L$ is locally constant).
Then $\forall \mathbf{w} \in E \, \, ||\mathbf{w}-\mathbf{w}^j|| < \delta \, \implies \, L(\mathbf{w}) = L(\mathbf{w}^j) = L_j. $
Since there are finitely many values $\mathbf{w}^j$ corresponding to distinct unique values $L^j$,  we can apply the result from Theorem 3 to give the required result.
\end{proof}

\begin{theorem}
Suppose that $k\ge 2$ and that $\mathcal{M} = \text{supp}(p)$.
Assume that (a) $q$ is bounded above, (b) $p$ is bounded away from zero, (c) $p$ is uniformly Lebesgue approximable,
(d) $\exists \, \delta_0$ such that $\forall \delta \in (0,\delta_0) \, \int_{\mathcal{M}} H(x,p,\delta,1/2)p(x) \mathrm{d}x < \infty,$
(e) $\int_{\mathcal{M}}||x-y||^{\gamma} p(y) \mathrm{d}y < \infty$ for almost all $x \in \mathcal{M}$, $\int \int_{\mathcal{M}^2}||x-y||^{\gamma} p(y) p(x) \mathrm{d}y \mathrm{d} x < \infty$.
Assume $S \subset \mathbb{R}^s$ is compact and that a unique $\mathbf{w}^* \in E$ maximises $D_h\left(p || q^{(\mathbf{w})} \right)$.

Then
\begin{equation}
  \lim_{n \rightarrow \infty} \mathbb{P} \left( \argmax_{w \in E} \hat{D}_h \left( X_{1:n} || Y_{1:n}^{(\mathbf{w})} \right) = \mathbf{w}^* \right) = 1
\end{equation}
\end{theorem}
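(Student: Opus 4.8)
The plan is to treat this theorem as the synthesis of the results already assembled, so that its proof reduces to checking that the hypotheses of the preceding lemmas are exactly those stated here. First I would observe that conditions (a)--(e) in the statement are verbatim the hypotheses of Lemma 1; hence Lemma 1 applies directly and yields $L_2$ consistency of the estimator, i.e. $\lim_{n\to\infty}\E[(\hat{D}_h(X_{1:n}\|Y_{1:n}^{(\mathbf{w})}) - D_h(p\|q^{(\mathbf{w})}))^2] = 0$ for each fixed $\mathbf{w}$. Combined with Lemma 2 (Chebyshev), this delivers convergence in probability of $\hat{D}_h$ to $D_h$ pointwise in $\mathbf{w}$.

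Next I would invoke Lemma \ref{lemma:piecewise_const}, which establishes that the map $\mathbf{w}\mapsto L(\mathbf{w}) = \hat{D}_h(X_{1:n}\|Y_{1:n}^{(\mathbf{w})})$ is piecewise constant on $E$ with only finitely many jump discontinuities. This is the structural ingredient that lets the continuous parameter space be handled: because the nearest-neighbour ranking of the $N$ pseudo-datasets changes at most finitely often as $\mathbf{w}$ varies, $L$ takes finitely many distinct values, each on a region on which it is locally constant.

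With these two facts in hand, every hypothesis of Lemma 5 is verified: the parameter space is compact (assumed), a unique $\mathbf{w}^*$ maximises $D_h(p\|q^{(\mathbf{w})})$ (assumed), the conditions of Lemma 1 hold (step one), and $L$ is piecewise constant with finitely many discontinuities (step two). I would therefore conclude by applying Lemma 5, which gives exactly the claimed limit. Concretely, the covering argument inside Lemma 5 selects finitely many representatives $\mathbf{w}^1,\dots,\mathbf{w}^\chi$ together with a radius $\delta$ so that each $\mathbf{w}\in E$ lies within $\delta$ of some $\mathbf{w}^j$ on which $L$ agrees, thereby reducing the problem to the finite case already settled in Theorem 3.

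The main obstacle is conceptual rather than computational, and it has in fact been absorbed into Lemma \ref{lemma:piecewise_const} and Lemma 5: the delicate point is passing from pointwise convergence in probability at finitely many representative weights to the statement that the \emph{argmax} over the whole continuum $E$ coincides with $\mathbf{w}^*$ with probability tending to one. This step is legitimate only because $L$ is piecewise constant with finitely many pieces; without that finiteness one could not fix a single $\delta$ that simultaneously controls the estimation error across all pieces. I would take care to confirm that the uniqueness assumption on $\mathbf{w}^*$ guarantees a strictly positive gap $D_h(p\|q^{(\mathbf{w}^*)}) - D_h(p\|q^{(\mathbf{w}^2)})$ between the largest and second-largest values of $D_h$, since this is precisely the quantity $\delta$ must fall below in Theorem 3, and it is what lets the finite union bound over the $\chi$ representatives collapse to probability one in the limit.
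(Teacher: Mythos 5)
Your proposal is correct and follows exactly the paper's route: the paper's proof likewise consists of invoking Lemma \ref{lemma:piecewise_const} to establish that $L(\mathbf{w})$ is piecewise constant with finitely many discontinuities, and then applying Lemma 5 (whose hypotheses---compactness, uniqueness of $\mathbf{w}^*$, and the conditions of Lemma 1---are assumed in the theorem statement) to conclude. Your additional remarks on the role of the gap $D_h\left(p || q^{(\mathbf{w}^*)}\right) - D_h\left(p || q^{(\mathbf{w}^2)}\right)$ and the finite covering simply make explicit what the paper delegates to Theorem 3 and Lemma 5.
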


\begin{proof}
 Apply the Lemma 4 to show that $L(\mathbf{w})$ is piecewise constant in $\mathbf{w}$. Apply Lemma 5 to give the desired result.
\end{proof}

To summarise, with this nearest neighbour estimator, under conditions on $p$ and $q$, we have $L_2$ convergence of the estimator and this ensures that, in the limit of more samples,
estimates of the distance between $p$ and $q$ will become more concentrated around the true distance, such that optimising the estimate of the distance will give the true optimum, $\mathbf{w}^*$, by making use of the piecewise constant structure of $L(\mathbf{w})$.
We assume in Theorem 2 (and elsewhere) that the space of parameters, $S$, is compact. In practice this is not a problem, since we can work with a constrained optimization problem and assume that the summary statistic weights lie within a large but finite region.
Although there are several conditions on the prior distribution, $p$, most reasonable choices of prior distribution will satisfy these, and only a single condition on the approximate posterior distribution, $q$, is assumed.
In the limit of having more samples from the distributions $p$ and $q^{(\mathbf{w})}$, selecting summary statistic weights, $\mathbf{w}$, based on optimizing the estimate from $\hat{D}_h$ will converge to give the true optimum, $\mathbf{w}^*$, of this distance between distributions.
For fixed weights $\mathbf{w}^*$ in the ABC distance function, ABC-SMC will target the correct posterior distribution.

\section{Examples} \label{Examples}
We apply our algorithm of automatic, adaptive weighting of summary statistics to a variety of test problems, including a toy model and several problems based on different chemical reaction networks. 
The dynamics of these networks are simulated stochastically using Gillespie's direct method \cite{gillespie1977exact},
which allows us to sample trajectories directly from the model.
Although for some of these models it is possible to solve for the likelihood analytically, we attempt parameter inference by simulation,
since solving for the likelihood is very computationally expensive and, in general, the analytical solution is not available. 
The summary statistics collected for each of the chemical reaction network problems are in the form of a time series, to imitate data that could be collected from a biological experiment.

To demonstrate the effectiveness of taking a flexible choice of distance weights, we make two comparisons.
Firstly, we compare results obtained using Algorithm 2 to those generated using a uniform choice of weights:
$w_i = 1 \hspace{2mm} \forall i$. 
Secondly, we compare to results generated using weights that scale with each summary statistic.
Here we use $w_i = 1/\sigma_i \hspace{2mm} \forall i$, where $\sigma_i$ is the standard deviation of the given summary statistic.
This is a frequently used choice of weight for summary statistics \cite{beaumont2002approximate}. 
We note that a table summarising the parameters used in the implementation of all the test problems can be found in Appendix A.

\subsection{Toy model} \label{toy_model}

We consider a tractable toy problem with a sufficient statistic to illustrate our method.
We observe 
\begin{equation}
 x_i \sim \text{Unif}([0,\theta]),
\end{equation}
for $i=1,\dots, r$.
In this case, the maximum of the observed values is a sufficient statistic, $s(\mathbf{x}) = \max_i x_i$.
We can sample the true posterior distribution $p(\theta | \mathbf{y})$ directly via MCMC and compare to the approximations obtained via ABC.

\begin{figure*}[h!]
\begin{center}
\includegraphics[width=0.45\textwidth]{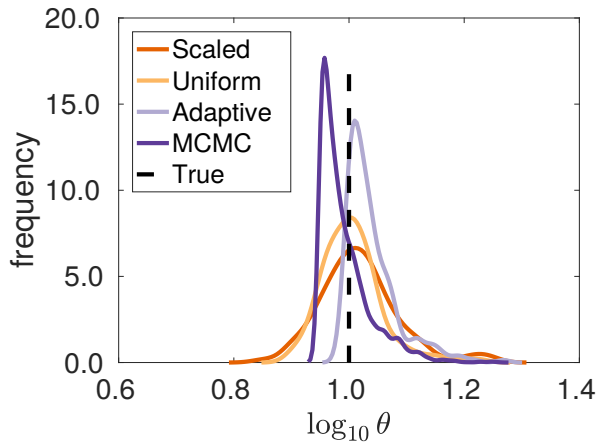}
\end{center}
\caption{Posterior for parameters $\theta$ of the uniform toy model for different weights in the ABC distance function. 
ABC-SMC was used to provide estimates of the posterior, with five generations and $N=50,000$ simulations at each generation with the posterior constructed from the closest $5\%$ of the simulations ($\alpha = 0.005$).
}
\label{fig:toy_model}
\end{figure*}

The results in Figure \ref{fig:toy_model} indicate that the method in Algorithm 2 is able to produce a higher quality approximation of the posterior for a given number of parameter samples compared to other methods of weighting the summary statistics.
The true parameter is $\theta = 10$, a prior uniform on the logarithm of the parameters over the interval $[10^{0},10^2]$ was used, and
$r=10$ samples of the uniform model were used as the dataset.

\subsection{Death process} \label{exp-decay}

For our first test problem, we consider estimating the rate parameter for a single, first order degradation reaction:
\begin{eqnarray}
\label{exponential}
\ce{A
->[k]
{ \emptyset }.
} 
\end{eqnarray}
We will consider for this, and subsequent, test problems that time has been non-dimensionalized.
Initially, we assume there are $A(0) = 10$ particles in the system, which is observed over a (non-dimensional) time period $[0,20]$.
We assume it is possible to measure the state of the system (in this case the number of molecules of species $A$) without observation noise
at given time points $t_0, t_1, \ldots , t_n$. 
For this test problem, we assume that we measure at $n$ equally spaced time intervals, where $n=32$. 

As our summary statistics, we take $s(\mathbf{x}) = [A(t_0), A(t_1), \linebreak \ldots , A(t_n), z] $
where $z$ is an observation of a random variable $Z \sim N(0,\sigma^2)$ that is uncorrelated with the death process.
We suppose that the scale of the variance, $\sigma$, is different to the scale of the observations of the exponential decay process, 
giving a simple system with a two-dimensional parameter to infer: $\theta = (k, \sigma)$.
Note that the scale of $z$ is determined by the standard deviation, $\sigma$, but the scale of the death process is affected by the initial condition, $A(0) = 10$,
resulting in two distinct scales in these summary statistics.

\begin{figure*}[h!]
\begin{center}
\includegraphics[width=\textwidth]{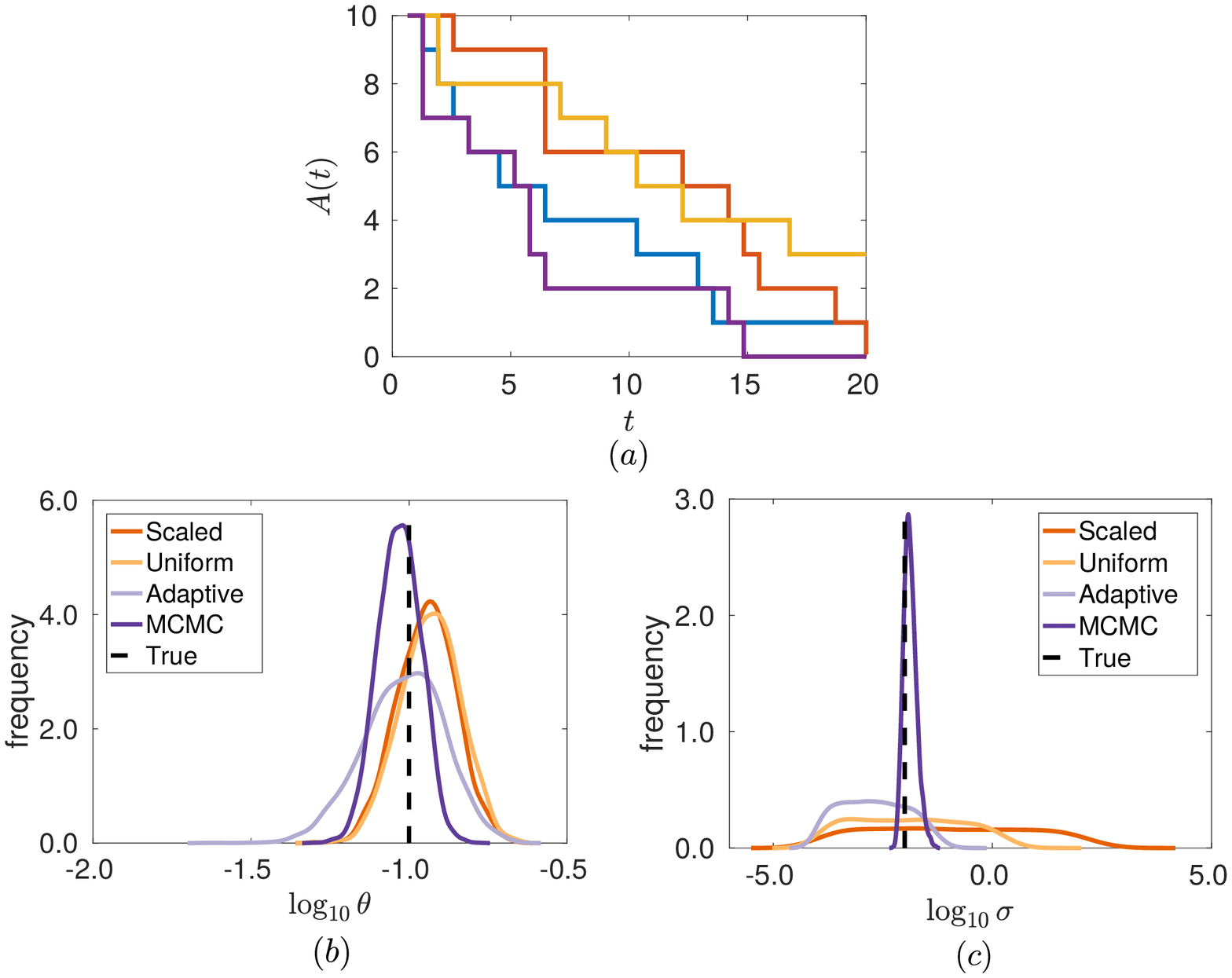}
\end{center}
\caption{Posteriors for parameters $k$ and $\sigma$ in the death process test problem for different weights in the ABC distance function. 
ABC-SMC was used to provide estimates of the posterior, with five generations and $N=50,000$ simulations at each generation with the posterior constructed from the closest $5\%$ of the simulations ($\alpha = 0.05$).
(a) shows typical output from the model for the true parameters.
The posteriors for $k$ are given in (b) and for $\sigma$ in (c).
}
\label{fig:Uniform}
\end{figure*}

Results of parameter inference for this system using ABC-SMC are shown in Figure \ref{fig:Uniform}, where the true parameters used are $\theta = (0.1, 0.01)$ and a prior uniform on the logarithm of each of the parameters over the interval $[10^{-3},10^3]$ was used.
Here we show marginal posterior distributions generated using uniform weights, weights scaled with the standard deviation of each summary statistic,
and adaptively chosen weights via the method outlined in Algorithm 2.
We observe similar performance in identification of the decay parameter $k$ using uniform weights, scaled weights and the adaptive choice of weights. 
Scaling the summary statistics with their standard deviation results in a posterior that does not provide much information over the prior for $\sigma$, since all the summary statistics are assumed to be equally informative which is not the case here.
Note that only one summary statistic provides information about the random variable $Z$, whereas the other $n+1$ summary statistics (which are observations of the decay process at time points $\{t_i\}_{i=0}^n$) provide information about the decay of species $A$.
The summary statistic weights chosen via the search process outlined in Algorithm 2 give rise to a posterior that outperforms the posteriors generated using uniform weights and scaled weights for the second parameter $\sigma$,
since only a single summary statistic provides relevant information for this parameter.

\subsection{Dimerization system} \label{Dimerization}

To examine a system with multiple scales, we consider also a dimerization system,
which undergoes a fast initial transient followed by slower subsequent dynamics \cite{lester2015adaptive}.
The dimerization system consists of the following reactions:
\begin{eqnarray*}
\label{R1}
R_1&:& ~~ \ce{S_1
->[k_1]
{ \emptyset },
}; \\
\label{R2}
R_2&:& ~~ \ce{S_2
->[k_2]
{ S_3 },
}; \\
\label{R3}
R_3&:& ~~ \ce{S_1 + S_1
->[k_3]
{ S_2 },
}; \\
\label{R4}
R_4&:& ~~ \ce{S_2
->[k_4]
{ S_1 + S_1 }.
}
\end{eqnarray*}

We take initial conditions $S_1(0)=10^5$, $S_2(0)=0$, $S_3(0)=0$ and consider an observational time period of $[0,100]$ with $n=32$ geometrically spaced observations (to capture the multiple timescales present), without observational noise.
For the dimerization system, we take the time series $s(\mathbf{x}) = [S_1(t_0), \ldots , S_1(t_n),S_2(t_0), \ldots , \linebreak S_2(t_n), S_3(t_0), \ldots , S_3(t_n)]$
as summary statistics and infer the four-dimensional parameter $\theta = (k_1, k_2, k_3, k_4)$.
\begin{figure*}[h!]
\begin{center}
\includegraphics[width=0.99\textwidth]{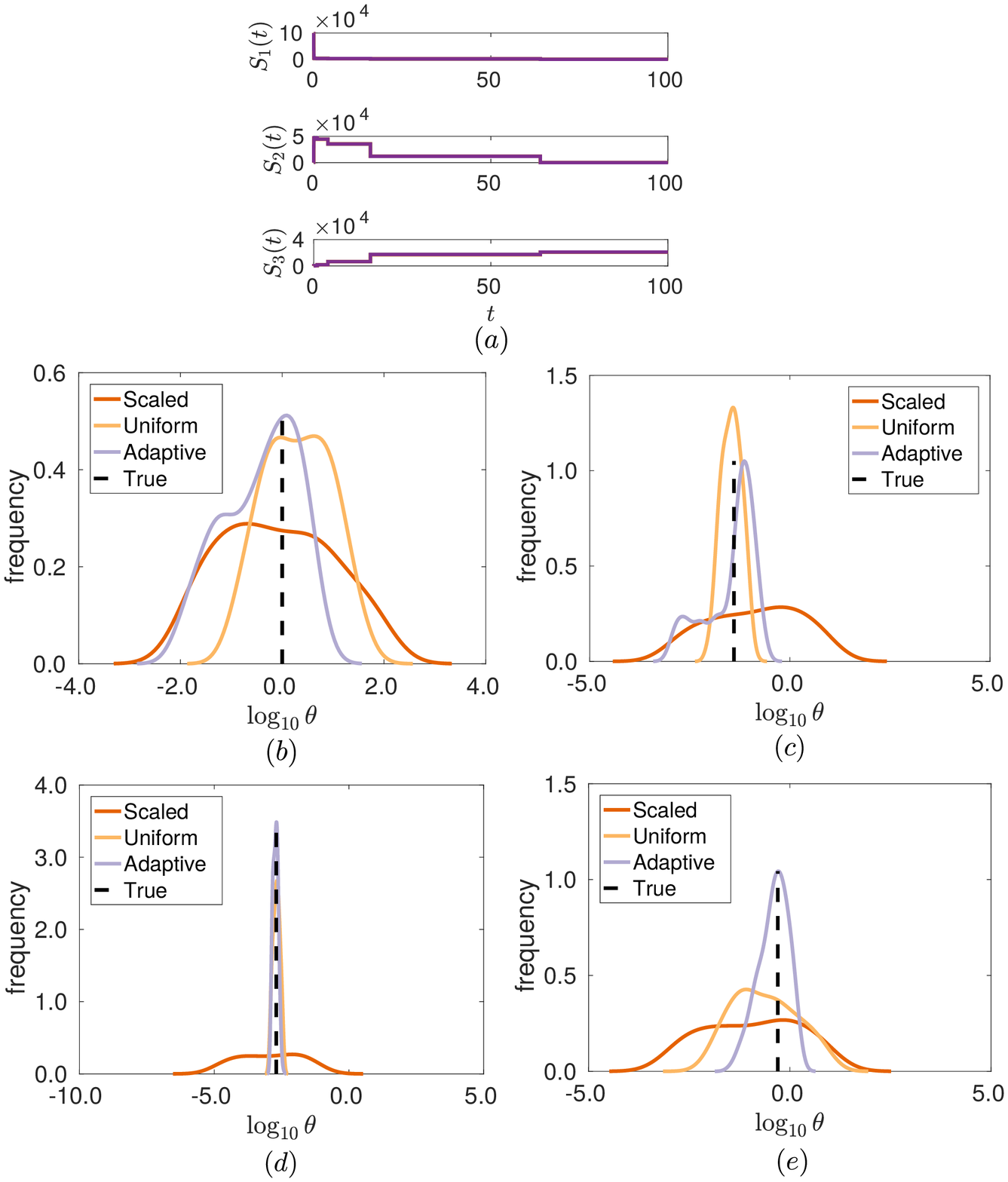}
\end{center}
\caption{Posteriors for parameters $\theta = (k_1, k_2, k_3, k_4)$ in the dimerization system for different weights in the ABC distance function. 
ABC-SMC was used with five generations and $N=50,000$ simulations at each generation with the posterior constructed from the closest $5\%$ of the simulations ($\alpha = 0.05$).
(a) shows typical output from the model for the true parameters, for each species, $S_i$.
Posterior marginal distributions for parameters $k_1,k_2,k_3,k_4$ are shown in (b) to (e).
}
\label{fig:dimerization}
\end{figure*}
\noindent We note that for a choice of parameter $\theta^* = (1, 0.04, \linebreak 0.002, 0.5)$, and the given initial conditions,
we obtain a fast decay of species $S_1$ and accumulation of species $S_2$,
followed by a slower decay of $S_2$ and accumulation of $S_3$ (see Figure \ref{fig:dimerization}(a)).

The results of parameter inference for this system can be seen in Figure \ref{fig:dimerization}.
The true parameters used are $\theta = (1, 0.04, 0.002, 0.5)$, and we apply a prior uniform on the logarithm of the parameters over the intervals
$[10^{-2},10^2]$, $[10^{-3},10^1]$, $[10^{-5},10^{-1}]$, $[10^{-3},10^1]$, respectively, for each parameter.
Parameters $k_1$ and $k_2$ are clearly identified by the adaptive choice of weights. 
The fast transient behaviour initially involves reactions at rate $k_1$, while $k_2$ corresponds to the longer timescale accumulation of species $S_3$.
Parameters $k_3$ and $k_4$ are harder to identify with broader resulting posteriors, but again the adaptive algorithm does a better job at excluding regions of search space than a uniform choice of weights, or a scaling with the standard deviation.
Scaling by the standard deviation is a poor choice here because for some of the time points, particularly in the fast initial transient region, there is no variation between the synthetic datasets.  

\subsection{Simple spatial model}
Spatial models produce very high dimensional data, containing information about dynamics in both space and time.
Here, we consider a simple spatial model in one dimension to describe the spreading of particles by diffusion without volume exclusion. 
We divide our spatial domain $X \in [-1, 1]$ into $m$ boxes or voxels, and label the numbers of particles in voxels $1, \ldots , m$ as $S_1, \ldots , S_m$, respectively.
Particles can jump between neighbouring voxels at rate $\theta = D/h^2$, where $D$ is the macroscopic diffusion constant and $h$ is the width of the voxel.
We assume zero flux conditions at $X=\pm 1$ and take $m=8$, so that $h=1/4$.
As an initial condition, we place 10 particles in each of the $m/2$ voxels on the left-hand side of the domain where $x<0$,
and allow the system to evolve over the time interval $[0,20]$.
We observe the system at $n=8$ equally spaced time points, and take as our summary statistic the time series for each voxel, $s(\mathbf{x}) = [S_1(t_0), \ldots , S_1(t_n),S_2(t_0), \ldots , S_2(t_n), \linebreak \ldots ,S_m(t_0), \ldots , S_m(t_n)]$,
where $S_i(t_j)$ is the number of particles in voxel $i$ at time point $t_j$.
Using synthetic data simulated with $\theta=0.1$, we attempt to recover the jump rate $\theta$.
The results of parameter inference for this problem are shown in Figure \ref{fig:spatial_test_problem}, where we have used a prior uniform on $\log_{10}(\theta)$ over the interval $[10^{-4}, 10^0]$.
We successfully obtain an informative unbiased posterior for $\theta$ using the adaptive choice of weights, with a notable improvement in comparison to the other methods for selecting the weights.

\begin{figure*}[h!]
\begin{center}
\includegraphics[width=\textwidth]{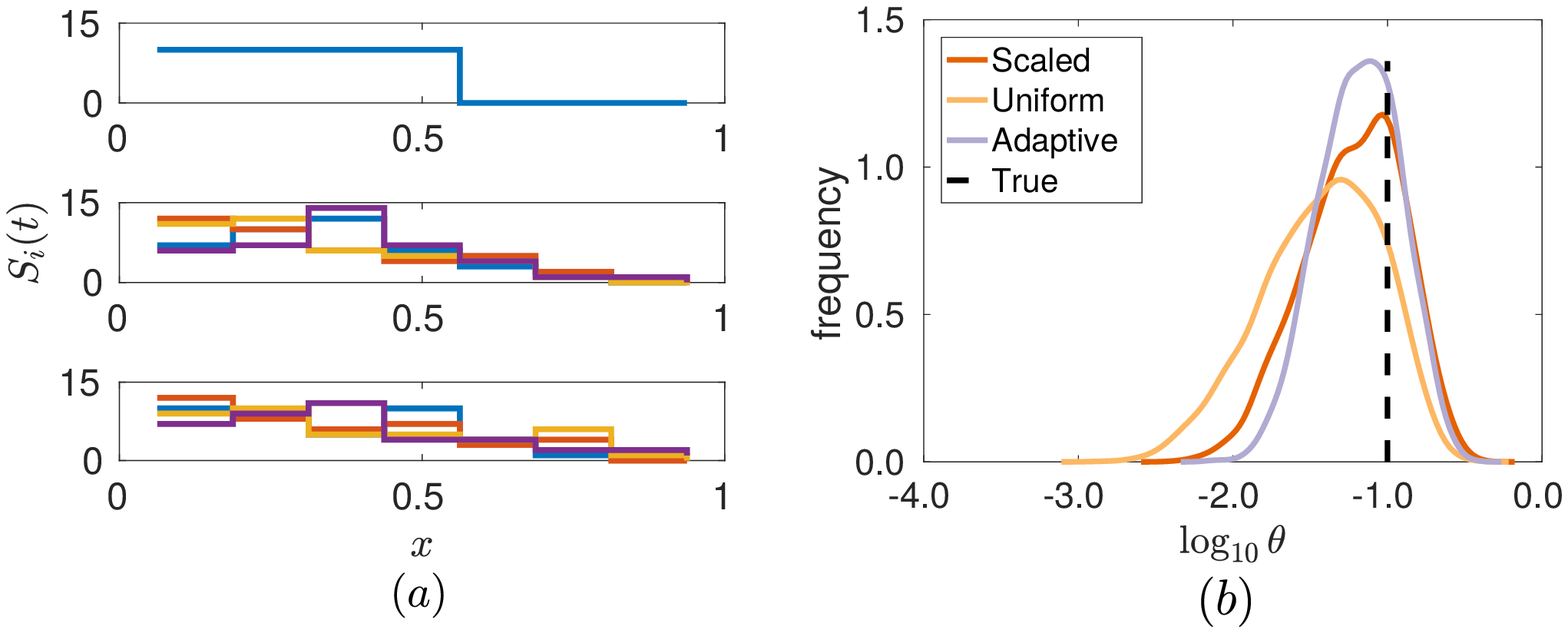}
\end{center}
\caption{Posteriors for parameter $\theta$ in the simple diffusion model for different weights in the ABC distance function. 
ABC-SMC was used for the inference with five generations and $N=50,000$ simulations at each generation with the posterior constructed from the closest $5\%$ of the simulations ($\alpha = 0.05$).
(a) shows the spatial profile at three different time points ($t=0,10,20$) and demonstrates the variability in the output for this spatial process across four realizations with the same parameter, $\theta =0.1$. 
In (b), we compare the posteriors obtained for $\theta$ with different choices of weights.
}
\label{fig:spatial_test_problem}
\end{figure*}

\subsection{Computational overhead} \label{Computational_overhead}
If our proposed approach of adapting the weights of each of the summary statistics is to be used in practice, we must ensure that the increases in the quality of the resulting posterior
justify the computational overhead required for the search process.
Otherwise, it would be preferable simply to generate the posterior using ABC-SMC with more samples.
Therefore we are interested in evaluating the computational overhead of the search process,
and how to limit the cost of the search in higher dimensions.

Using the dimerization test problem, as described in Section \ref{Dimerization}, we ran Algorithm 2 with $N_1=5,000$, $\alpha_1 = 5\%$.
To compare this to ABC-SMC with uniform weights, we performed parameter inference with uniform weights using both $N_1=5,000$, $\alpha_1 = 5\%$ and $N_2=5,600$, $\alpha_2=4.46\%$.  
The value of $N_2$ was chosen such that an equal length of computation time was spent in the search steps to find the summary statistic weights in Algorithm 2, 
as was spent in generating extra samples in ABC-SMC with uniform weights.
A corresponding lower value of $\alpha$ was chosen so that the number of particles in the parameter sample was equivalent.

In this case, adaptively choosing weights using Algorithm 2 resulted in a significantly greater distance between the prior and posterior, 
and reduced the bias in the posterior compared to running ABC-SMC with more samples, as measured by the distance between the maximum posterior estimate and the true parameters.
These results, which represent improvements in the posterior for the same computational cost, are shown in Table 1 and the same procedure was used for the other test problems.

\begin{table*}[t]
  \centering
    \begin{tabular}{ | l | l | l |}
    \hline
    \textbf{Test problem} & \textbf{Hellinger distance} & \textbf{Bias in posterior} \\
    & \textbf{between prior and posterior} & \\ \hline
    Toy model & 0.792/0.783/\textbf{0.803} & 0.047/0.047/\textbf{0.032} \\ \hline
    Death process & 0.838 / 0.825 / \textbf{0.853} & 0.136 / \textbf{0.114} / 0.260 \\ \hline
    Dimerization & 0.923 / 0.923 / \textbf{0.937} & 0.125 / 0.243 / \textbf{0.057} \\ \hline
    Diffusion & 0.723 / 0.730 / \textbf{0.771} & 0.486 / 0.491 / \textbf{0.130} \\ \hline
    \end{tabular}
        \caption{Performance of Algorithm 2 compared with increasing the number of samples in ABC-SMC.
        Results are shown for each of the test problems in the form: ABC-SMC with $N_1$ and $\alpha_1$ / ABC-SMC with $N_2$ and $\alpha_2$ / Algorithm 2 with $N_1$ and $\alpha_1$.
        Highlighted in bold is the method with best performance according to each metric.}
\end{table*}

\subsection{Consistent weights} \label{Consistent}
Ideally, our search process should find the global optimum weight vector, so that if Algorithm 2 is run multiple times the same weight vector is obtained.
In practice, for the examples we have explored, the function to be optimized (distance between prior and posterior as a function of the distance weights) is very flat with respect to some of the distance weights.
This makes it hard to consistently identify a global maximum.
In Figure \ref{Fig:consistent_weights}, we explore how the chosen weights vary for the toy model and the death process examples.
We can interpret this as the algorithm identifying the informative summary statistics and appropriately using the information from these, 
while allowing weights for other summary statistics to take a range of values without much effect on the resulting posterior.
The largest weight is given to the most informative summary statistic.

The weights found for different runs of the algorithm are highly correlated, however, as expected.
To better compare the weights found by optimization across runs of the algorithm, we subtract the mean of the weights for each run of the Algorithm 2. 
This highlights the summary statistic $z$ for the death process test problem as highly informative (see Figure \ref{Fig:consistent_weights}(b)), which agrees with our intuition,
since only this summary statistic gives informative about the parameter $\sigma$, whereas any of the others can provide information about the decay parameter, $k$.

\begin{figure*}[h!]
\begin{center}
\includegraphics[width=\textwidth]{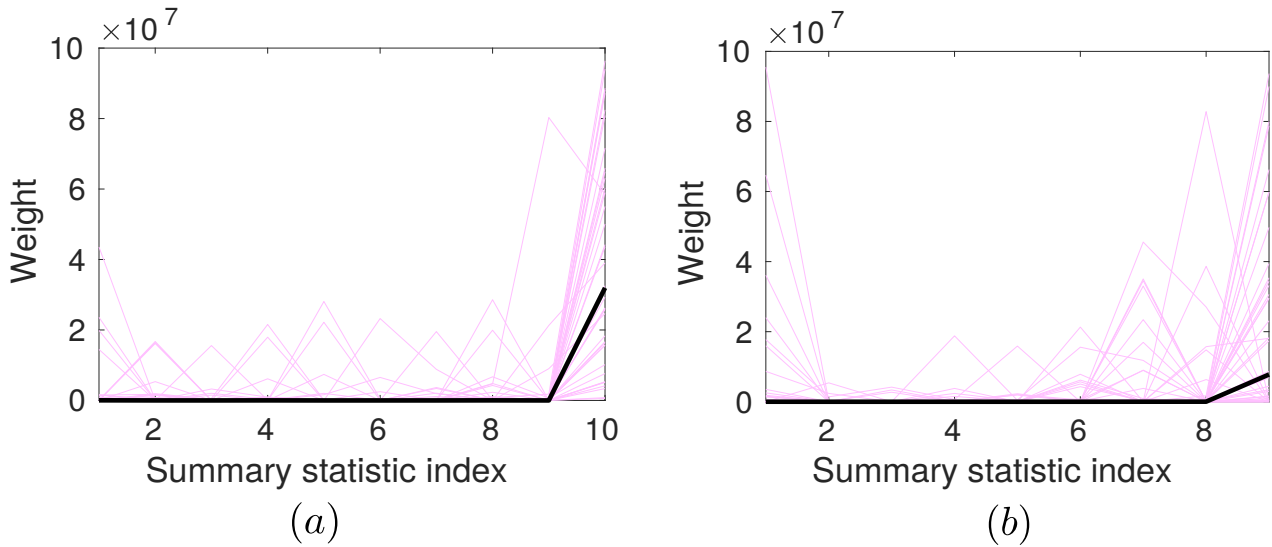}
\end{center}
\caption{The optimal distance weights found from the search procedure after 40 successive runs of Algorithm 2
on the toy model test problem (described in Section \ref{toy_model} in (a) and on the death process test problem (described in Section \ref{exp-decay}) in (b).
Parameters as for Figures \ref{fig:toy_model} and \ref{fig:Uniform} in each case.
The faint purple lines show the resulting summary statistics weights from repeated runs of the ABC distance weight algorithm, while the black line shows the mean of the weights selected.
}
\label{Fig:consistent_weights}
\end{figure*}

\section{Discussion}
\label{Summary} 
In this work, we have presented a method for improving the quality of posteriors resulting from approximate inference using ABC-SMC by optimizing the weights of the ABC distance function, $d_{\mathbf{w}}(s(\mathbf{x_1}),s(\mathbf{x_2}))$.
By applying the methodology to several test problems we have demonstrated that our novel, adaptive method allows effective combination of summary statistics. 
We see superior performance using our algorithm in comparison with naive choices of uniform weights or using the scale of the summary statistics.
Further benefits of adapting the weights include removing the requirement for design and selection of summary statistics `by hand'. 

\subsection{Comparison to dimensionality reduction methods} \label{DimReduction}
Adaptively choosing the summary statistic weights within the ABC distance function can be seen as achieving a similar goal to summary statistic dimension reduction techniques \cite{nunes2010optimal,blum2013comparative}.
These techniques either project high-dimensional summary statistics into a lower dimensional subspace, or select an optimal subset of summary statistics via some optimality criterion.
In contrast, a similar effect is achieved here when the statistics are combined in the weighted Euclidean distance function, $d_{\mathbf{w}}(\mathbf{x_1},\mathbf{x_2})$,
by weighting summary statistics to take account of both their inherent scale,
and also their relative contribution towards the posterior distribution.
Uninformative summary statistics are automatically assigned a lower weighting, while more informative summary statistics are given high weights relative to their scale.
 
\begin{table*}[t]
  \centering
    \begin{tabular}{ | l | l | l |}
    \hline
    \textbf{Test problem} & \textbf{Hellinger distance}& \textbf{Bias in posterior} \\
    & \textbf{between prior and posterior} & \\ \hline
    Toy model &     \textbf{0.800} / 0.793 / 0.790 & \textbf{0.045} / 0.051 / 0.048 \\ \hline
    Death process & \textbf{0.914} / 0.0896 / 0.844 & \textbf{1.078} / 1.304 / 2.641 \\ \hline
    Dimerization & \textbf{0.877} / 0.858 / 0.876 & 1.645 / 1.229 / \textbf{0.715} \\ \hline 
    Diffusion & \textbf{0.737} / 0.673/ 0.721 & 0.451 / 0.511 / \textbf{0.275} \\ \hline
    \end{tabular}
        \caption{Comparison of the quality of the posteriors obtained using different methods to combine summary statistics.
        Results given as adaptive method/\citet{barnes2012considerate}/\citet{fearnhead2012constructing}.
        Bold text highlights the best performance on a metric for a test problem.}
\end{table*}

Previous subset selection methods have used criteria for approximate sufficiency of a subset of statistics to test whether adding a new
statistic results in a change in the posterior above a certain threshold \cite{joyce2008approximately}; 
minimising an information criterion based on knn-entropy over all subsets of summary statistics \cite{nunes2010optimal}; 
and reducing loss of information by adding summary statistics until the KL divergence between the resulting posteriors is below a threshold \cite{barnes2012considerate}.
All of these methods seek to choose a lower dimensional subset of a given list of summary statistics. 
Using this lower dimensional subset increases the acceptance rate for samples in ABC by avoiding the curse of dimensionality for the data.
However, the results depend on the order in which the summary statistics (or subsets) are analysed.

A popular method, implemented in packages such as abctools \cite{nunes2015abctools}, is the semi-automatic ABC approach of Fearnhead and Prangle \cite{fearnhead2012constructing}. 
This approach uses a projection method to find informative linear combinations of statistics by fitting a regression for each parameter in the model. 
The result is a reduction from the original high-dimensional set of summary statistics to a new lower dimensional set of summary statistics with the same dimensionality as the parameter space.
Improved results are seen by using a pilot run of ABC to choose a subset of parameter space as a training region for the regression.
Further improvements are obtained by extending the vector of summary statistics by concatenating with a non-linear transformation of the same summary statistics, 
$s(\mathbf{x}) = (s,s^2, s^3, s^4)$, where $s$ is a given vector of summary statistics and the superscripts indicating raising these to the given power.
This method of Fearnhead and Prangle \cite{fearnhead2012constructing} uses contributions from all of the summary statistics and should optimize the mean quadratic loss.

We tested our adaptive weight selection algorithm against the semi-automatic ABC method \cite{fearnhead2012constructing}, and the subset selection method of \citet{barnes2012considerate} based on an approximate sufficiency criterion.
In general, for the test problems considered, as described in Section \ref{Examples}, our method outperforms the competing methods, as shown by the metrics in Table 2.
A larger value of the Hellinger distance indicates a greater distance between prior and posterior.
The bias gives the distance between the posterior mean and the true parameter value.
In implementing these methods, we have used only ABC rejection sampling, equivalent to a single generation of ABC-SMC, to compare the methods.
In practice, these results mean that our method outlined in Algorithm 2 for adaptively choosing the weights of summary statistics 
produces a more informative posterior than competing methods based on dimensionality reduction of summary statistics.

\subsection{Further work}
Our method for automatically adapting the weights of the ABC distance function could be combined with other methods for dimensionality reduction of summary statistics to further improve the quality of posteriors produced with ABC for given computational effort.
A particular area to consider would be how best to combine optimization of the distance weights for ABC and dimensionality reduction of the summary statistics.
These are related approaches that can work well together. 
One approach that could be explored, for example, is enforcing some sparcity of the weights during the search step of the weights optimization.
By setting some weights to be explicitly zero, we exclude the corresponding summary statistics, effectively reducing the dimensionality of our summary statistics.
Further investigations could explore how best to sample sparse subsets of weights in high dimensions.

\subsection{Conclusion}
In summary, we propose a computationally efficient search procedure to identify a set of optimum weights to allow us
to combine summary statistics within the ABC distance function in such a way that the gain in information in the posterior over the prior is maximized.

\begin{acknowledgements}
We thank James Martin and Geoff Nichols for helpful discussions of this work.
This work was supported by funding from the Engineering and Physical Sciences Research Council (EPSRC) (grant no. EP/G03706X/1).
Ruth E Baker is a Royal Society Wolfson Research Merit Award holder and a Leverhulme Research Fellow, and would also like to thank the BBSRC for funding via grant number BB/R00816/1.
\end{acknowledgements}

\appendix
\section{Table of hyperparameters}
\begin{table*}[t]
  \centering
    \begin{tabular}{ | l | l | l | l | l | l | l | l | l | l | }
    \hline
    \textbf{Test} & $\mathbf{n}$& $\mathbf{A(0)}$ & $T$ & $\mathbf{\theta^*}$ & $N$ & $\mathbf{\alpha}$ & \textbf{Repeats} & \textbf{Proposal} & \textbf{Prior} \\
    \textbf{problem} & & & & & & & & \textbf{s.d.} & \textbf{interval} \\ \hline
    Toy & 32 & - & - & 10 & $5 \ast 10^4$ & 0.005 & 1 & 0.25 & $[10^{-4},10^4]$ \\
    model & & & & & & & & & \\  \hline    
    Death & 32 & 10 & 20 & $(0.1,0.01)$ & $5 \ast 10^5$ & 0.005 & 5 & 0.25 & $[10^{-4},10^4]$ \\
    process & & & & & & & & & \\  \hline
    Dimerization & 16 & $(10^5,0,0)$ & 100 & $(1,0.04,$ & $5 \ast 10^4$ & 0.05 & 1 & 0.25 & $[10^{-2},10^2],$ \\
    & & & & $ 0.002,0.5)$ & & & & & $ [10^{-3},10^1],$ \\ 
    & & & & & & & & & $ [10^{-5},10^{-1}],$ \\ 
    & & & & & & & & & $ [10^{-3},10^1]$ \\ \hline
    Diffusion & 8 & $10 \ast \mathds{1}_{x<0}$ & 20 & 0.1 & $5 \ast 10^4$ & 0.05 & 5 & 0.25 & $ [10^{-4},10^0]$ \\
    & & & & & & & & & \\  \hline
    \end{tabular}
        \caption{Summary of hyperparameters used in simulations.}
\end{table*}


\bibliography{TT18}   
\bibliographystyle{spbasic} 
%
%

\end{document}